\newtheorem{lemma}{Lemma}
\newtheorem{proposition}{Proposition}
\newtheorem{remark}{Remark}
\newcommand{\tr}{\text{tr}}
\newcommand{\Cov}{{\mathbb C}\text{ov}\xspace}
\newcommand{\Esp}{\mathbb E}
\newcommand{\Rbb}{\mathbb R}
\newcommand{\Nbb}{\mathbb N}
\newcommand{\Ncal}{\mathcal{N}}
\newcommand{\Pcal}{\mathcal{P}}
\newcommand{\Qcal}{\mathcal{Q}}
\newcommand{\Var}{\mathbb V}
\newcommand{\eqdef}{\triangleq}
\newcommand{\matr}[1]{\boldsymbol{#1}}
\newcommand{\vect}[1]{\matr{#1}} 
\newcommand{\trans}{\intercal}
\newcommand{\transpose}[1]{\matr{#1}^\trans}
\newcommand{\crossprod}[2]{\transpose{#1} \matr{#2}}
\newcommand{\tcrossprod}[2]{\matr{#1} \transpose{#2}}
\DeclareMathOperator*{\diag}{diag}
\newcommand{\hatB}{\widehat{\matr{B}}}
\newcommand{\hatTheta}{\widehat{\matr{\Theta}}}
\newcommand{\hatSigma}{\widehat{\matr{\Sigma}}}
\newcommand{\tildeM}{\widetilde{\matr{M}}}
\newcommand{\tildeZ}{\widetilde{\matr{Z}}}
\newcommand{\tildeP}{\widetilde{\matr{P}}}
\newcommand{\pPCA}{pPCA\xspace}
\begin{document}

\begin{frontmatter}

 \title{Variational inference for probabilistic Poisson PCA}
 \runtitle{Variational inference for probabilistic Poisson PCA}
 
 \begin{aug}
 
 \author{\fnms{Julien} \snm{Chiquet}${}^{1}$\corref{}},
 \author{\fnms{Mahendra} \snm{Mariadassou}${}^{2}$}
 \and
 \author{\fnms{St\'ephane} \snm{Robin}${}^{1}$}
 \runauthor{Chiquet, Mariadassou, Robin}
 
 \ead[label=e1]{julien.chiquet@inra.fr}
 \ead[label=e2]{mahendra.mariadassou@inra.fr}
 \ead[label=e3]{robin@agroparistech.fr}

 \address{
 MIA-Paris\\
 UMR 518 AgroParisTech / INRA \\
 AgroParisTech\\
 16, rue Claude Bernard \\
 75231 Paris CEDEX 05, France \\\printead{e1,e3}} 

 \address{INRA Unit\'e MaIAGE\\
 Bât. 233 et 210 \\
 Domaine de Vilvert\\
 78352 JOUY-EN-JOSAS CEDEX, France \\ \printead{e2}}

\affiliation{\small ${}^{1}$UMR MIA-Paris, AgroParisTech, INRA, Universit\'e Paris-Saclay,  Paris, France\\
\small ${}^{2}$MaIAGE, INRA, Universit´e Paris-Saclay, Jouy-en-Josas, France}
 \end{aug}
 
 \begin{abstract}
   Many application domains such as ecology or genomics have to deal
   with multivariate non Gaussian observations. A typical example is
   the joint observation of the respective abundances of a set of
   species in a series of sites, aiming to understand the
   co-variations between these species. The Gaussian setting provides
   a canonical way to model such dependencies, but does not apply in
   general. We consider here the multivariate exponential family
   framework for which we introduce a generic model with multivariate
   Gaussian latent variables. We show that approximate maximum
   likelihood inference can be achieved via a variational algorithm
   for which gradient descent easily applies. We show that this
   setting enables us to account for covariates and offsets. We then
   focus on the case of the Poisson-lognormal model in the context of
   community ecology.  We demonstrate the efficiency of our algorithm
   on microbial ecology datasets. We illustrate the importance of
   accounting for the effects of covariates to better understand
   interactions between species.
 \end{abstract}

\begin{keyword}
\kwd{Probabilistic PCA}
\kwd{Poisson-lognormal model}
\kwd{Count data}
\kwd{Variational inference}
\end{keyword}

\end{frontmatter}

\section{Introduction} \label{sec:Intro}


Principal component analysis (PCA) is among the oldest and most
popular tool for multivariate analysis. It basically aims at reducing
the dimension of a large data set made of continuous variables
\citep{And03, MKB79} in order to ease its interpretation and
visualization. The methodology exploits the dependency
structure between the variables to exhibit the few synthetic variables
that best summarize the information content of the whole data set: the
principal components. In that sense, PCA can be viewed as a way to
better understand the dependency structure between the variables. From
a purely algebraic point-of-view, PCA can be seen as a
matrix-factorization problem where the data matrix is decomposed as
the product of a loading matrix and a score matrix \citep{EcY36}.

For statistical purposes, PCA can also be cast in a probabilistic
framework. Probabilistic PCA (\pPCA) is a model-based version of PCA
originally defined in a Gaussian setting, in which the scores are
treated as random hidden variables \citep{TiB99,Min00}. It is closely
related to factor analysis. As it involves hidden variables,
maximum-likelihood estimates (MLE) can be obtained via an EM algorithm
\citep{DLR77}. One major interest of the probabilistic approach is
that it allows to combine dimension reduction with other modeling
tools, such as regression on some available covariates. Because
observed variables can be affected by the variations of such
covariates, the correction for their potential effects is desirable to
avoid the presence of spurious correlations between the responses.

The Gaussian setting is obviously convenient as the dependency
structure is entirely encoded in the covariance matrix but \pPCA has
been extended to more general settings. Indeed, in many applications
\citep{RoW05,SrL10}, Gaussian models need to be adapted to handle
specific measurement types, such as binary or count data. For count
data, the multivariate Poisson distribution seems a natural
counterpart of the multivariate normal. However, no canonical form
exist for this distribution \citep{JKB97}, and several alternatives
have been proposed in the literature including Gamma-Poisson
\citep{Nel85} and lognormal-Poisson \citep{AiH89,Izs08}. The latter
takes advantage of the properties of the Gaussian distribution to
display a larger panel of dependency structure than the former, but
maximum likelihood-based inference raises some issues as the MLE of
the covariance matrix is not always positive definite.


A series of works have contributed to extend PCA to a broader class of
distributions, typically in the exponential family. The matrix
factorization point-of-view has been adopted to satisfy a positivity
constraint of the parameters \citep{Laf15} and to minimize the Poisson
loss function \citep{CaX15} or more general losses \citep{LeS01}
consistent with exponential family noise. Sparse extensions have also
been proposed \citep{WTT09,LDS16}. In a model-based context,
\cite{CDS01} suggest to minimize a Bregman divergence to get estimates
of the scores: the divergence is chosen according to the distribution
at hand and a generic alternating minimization scheme is
proposed. \cite{SHD14} consider a similar framework and use matrix
factorization for the minimization of Bregman divergence. In both
cases, the scores are considered as fixed parameters. \cite{MGH09}
cast the same model in a Bayesian context and use Monte-Carlo sampling
for the inference. \cite{AGZ15} consider Bayesian inference of the
Gamma-Poisson distribution. A Bayesian version of PCA (where both
loadings and scores are treated as random) is considered in
\cite{LiT10}.

\cite{Lan15} reframes exponential family PCA as an optimization
problem with some rank constraints and develops both a convex
relaxation and a maximization-minimization algorithm for binomial and
Poisson families. Finally \cite{ZHD12} and \cite{Zho16} consider
factor analysis in the more complex setting of negative-binomial
families. Our approach differs from the previous ones as we only
consider scores as random variables, whereas we consider the loadings
as fixed parameters, in the exact analog of \citeauthor{TiB99}'s pPCA.

As recalled above, in \pPCA, the scores are treated as hidden
variables. One of the main issue of non-Gaussian \pPCA arises from the
fact that their conditional distribution given the observed data is
often intractable, which hampers the use of an
Expectation-Minimization (EM) strategy. Variational approximations
\citep{JaJ00,WaJ08} have become a standard tool to approximate such
conditional distributions. \cite{Kar05} uses such an approximation for
the inference of the one-dimensional Poisson-lognormal model and
derives a variational EM (VEM) algorithm. \cite{HOW11} provide a
theoretical analysis of this approximation for the same model and
prove the consistency of the estimators. Indeed, even the conditional
distribution of one single hidden coordinate (given all others) is
unknown, which makes regular Gibbs sampling inaccessible. As a
consequence, \cite{LeS01} use moment estimates, whereas, in a Bayesian
context, \cite{LiT10} resort to a variational approximation of the
conditional distribution.

\paragraph{Our contribution} We define a general framework for \pPCA
in the simple exponential family. The model we consider combines
dimension reduction (via \pPCA) and regression, in order to account
for known effects and focus on the remaining dependency
structure. Scores are assumed to be Gaussian to allow a large panel of
dependency structures. We put a special emphasis on the analysis of
count data.  We adopt a frequentist setting rather than a Bayesian
approach to avoid non-scalable, computing-heavy Monte-Carlo
sampling. We use a variational approximation of the conditional
distribution of the scores given the observed data to derive a
variational lower bound of the likelihood. Since only the scores are
assumed to be random, we can prove that this bound is biconcave,
\emph{i.e.} concave in the model parameters and in the variational
parameters but not jointly concave in general. Biconcavity allows us
to design a gradient-based method rather than a (variational) EM
algorithm, traditionally used in this setting.

We illustrate the interest of our model on two examples of microbial
ecology. We show that the proposed algorithm is efficient for large
datasets such as these encountered in metagenomics. We also show the
importance of accounting for covariates and offset, in order to go
beyond first-order effects. More specifically, we show how the
proposed modeling allows us to distinguish between correlations that
are caused by known covariates from those corresponding to an unknown
structure and requiring further investigations.

The paper is organized as follows: in Section \ref{sec:Model} we introduce pPCA for the exponential family and the variational framework that we consider. Section \ref{sec:Extension} generalizes the model in the manner of a generalized linear model, in order to handle covariates and offsets. Then, Section \ref{sec:Inference} is dedicated to the inference and optimization strategy. Section \ref{sec:poisson} details the special Poisson case and Section \ref{sec:Visualization} devises the visualization, an important issue for non-Gaussian PCA methods. Finally, Section~\ref{sec:Illustration} considers applications to two examples from metagenomics: the impact of a pathogenic fungi on microbial communities from  tree leaves, and the impact of weaning on piglets gut microbiota.

\section{A variational framework for probabilistic PCA in the
 exponential family}
\label{sec:Model}

We start this section by stating the probabilistic framework
associated to Gaussian probabilistic PCA. Then we show how it can be
naturally extended to other exponential families. We finally derive
variational lower bounds for the likelihood of pPCA and its gradient,
which are the building blocks of our inference strategy.

\subsection{Gaussian probabilistic PCA (pPCA)}

The probabilistic version of principal component analysis -- or pPCA
-- \citep{Min00, MGH09, TiB99} relates a sample of $p$-dimensional
observation vectors $\vect{Y}_i$ to a sample of $q$-dimensional
vectors of latent variables $\vect{W}_i$ in the following way:
\begin{equation}
 \label{eq:ppca-model0}
\vect{Y}_i = \vect{\mu} + \matr{B} \vect{W}_i + \vect{\varepsilon}_i,
\qquad \vect{\varepsilon}_i \sim \Ncal(\vect{0}_p, \sigma^2
\matr{I}_p).
\end{equation}
The parameter $\vect{\mu}$ allows the model to have \emph{main
  effects}. The $p \times q$ matrix $\matr{B}$ captures the dependence
between latent and observed variables. Furthermore, the latent vectors
are conventionally assumed to have independent Gaussian components
with unit variance, that is to say,
$\vect{W}_i \sim \Ncal(\vect{0}_q, \matr{I}_q)$. This ensures that
there is no structure in the latent space. Model
\eqref{eq:ppca-model0} can thus be restated as
$\vect{Y}_i \sim \Ncal(\vect{\mu}, \tcrossprod{B}{B} +
\sigma^2\matr{I}_p)$.

In the following, we consider an alternative formulation stated in a
hierarchical framework. Despite its seemingly more complex nature,
it lends itself nicely to generalizations. Formally,
\begin{equation}
  \label{eq:ppca-model}
 \begin{array}{rll}
   \text{latent space} & (\vect{W}_i)_{i=1,\dots,n} \quad \text{i.i.d.},  
   & \vect{W}_i \sim \Ncal(\vect{0}_q, \matr{I}_q) \\[1ex]
   \text{parameter space} & \vect{Z}_i = \vect{\mu} + \matr{B} 
                            \vect{W}_i, & \\[1ex]
   \text{observation space} & \left(Y_{ij} | 
Z_{ij}\right)_{i=1,\dots,n;\; j=1,\dots,p} \quad \text{indep.}, 
   & Y_{ij} | Z_{ij} \sim \Ncal(Z_{ij}, \sigma^2)
 \end{array}
\end{equation}
In Equation~\eqref{eq:ppca-model}, $\vect{Z}_i$ is a linear transform
of $\vect{W}_i$ and the last layer $\vect{Y}_i|\vect{Z}_i$ simply
corresponds to \emph{observation noise}.  Informally, the
\emph{latent} variables $\vect{W}_i$ (in $\mathbb{R}^q$) are mapped to
a linear subspace of the \emph{parameter} space $\mathbb{R}^p$ via the
$\vect{Z}_i$ which are then pushed into the \emph{observation} space
using Gaussian emission laws. The main idea of this paper is to
replace Gaussian emission laws with univariate natural exponential
families.

Note that the diagonal nature of the covariance matrix of
$\vect{\varepsilon}_i$ specified in \eqref{eq:ppca-model0} now means
that, \emph{conditionally on $\vect{Z}_i$}, all components of
$\vect{Y}_i$ are independent.  This is why we may consider univariate
variables $Y_{ij} | Z_{ij}$ in Formulation
\eqref{eq:ppca-model}. Although the observation noises are
conditionally independent, the coordinates of a given $\vect{Y}_i$ are
not, which makes the model genuinely multivariate. This is further
emphasized in Section \ref{subsec:featuresPoisson}.

The loading matrix $\matr{B}$ is a convenience parameter that is
useful for both optimization and visualization of the model but not
identifiable \emph{per se}. Indeed any orthogonal transformation of $\matr{B}$ 
leads
to the same model: denoting $\matr{\Sigma} = \tcrossprod{B}{B}$, Model
\eqref{eq:ppca-model} can be rephrased as
\begin{equation*}
  \begin{split}
    (\vect{Z}_i)_{i=1,\dots,n} \, \text{i.i.d.}, \, \vect{Z}_i \sim \Ncal(\vect{\mu}, \matr{\Sigma}), \\
    \left(Y_{ij} | Z_{ij}\right)_{i=1,\dots,n;\; j=1,\dots,p} \, \text{indep.}, 
\, Y_{ij} | Z_{ij} \sim \Ncal(Z_{ij}, \sigma^2).
  \end{split}
\end{equation*}
As a consequence, the identifiable parameters of the model are $\vect{\mu}$ and $\matr{\Sigma}$.

Hereafter and unless stated otherwise, index $i$ refers to \emph{observations} 
and ranges in $\{1,\dots,n\}$, index $j$ refers to \emph{variables} and ranges 
in $\{1,\dots,p\}$ and index $k$ refers to \emph{factors} and ranges in 
$\{1,\dots,q\}$.

\subsection{Natural Exponential family (NEF)}

The work in this study is based on essential properties of univariate
\emph{natural exponential families} (NEF) where the parameter is in
canonical form. They include normal distribution with known variance,
Poisson distribution, gamma distribution with known shape parameter
(and therefore exponential distribution as a particular example) and
binomial distribution with known number of trials. The probability
density (or mass function) of a NEF can be written
\begin{equation}
 \label{eq:exp-fam-dist}
 f(x|\lambda) = \exp \left( x\lambda- b(\lambda) - a(x) \right)
\end{equation}
where $\lambda$ is the canonical parameter and $b$ and $a$ are known
functions. The function $b$ is well known to be convex (and analytic)
over its domain and the mean and variance are easily deduced from $b$
as
\begin{equation*}
 \Esp_{\lambda}[X] = b'(\lambda) \quad \text{and} \quad \Var_{\lambda}[X] = 
b''(\lambda).
\end{equation*}
The canonical link function $g$ is defined such that $g(b'(\lambda)) = \lambda$. 
The maximum likelihood estimate $\hat{\lambda}$ of $\lambda$ from a single 
observation $x$ is given by $\hat{\lambda} = \hat{\lambda}(x) = g(x)$ and 
satisfies 
\begin{equation*}
 \Esp_{\hat{\lambda}(x)}[X] = b'(\hat{\lambda}(x)) = x.
\end{equation*}

\subsection{Probabilistic PCA for the exponential family}

We now extend pPCA from the Gaussian setting to more general NEF. The
connection between the two versions is exactly the same as the
connection between linear models and generalized linear models (GLM).
Intuitively, we assume that $i)$ there exists a (low) $q$-dimensional
(linear) subspace in the \emph{natural canonical parameter space}
where some latent variable $\vect{Z}_i$ lie; and $ii)$ observations
$\vect{Y}_i$ are generated in the \emph{observation space} according to some
NEF distribution with parameter $\vect{Z}$. The latter is linked to
$\Esp[\vect{Y}_i|\vect{Z}_i]$ through the canonical link function $g$.
In the Gaussian case, the link function is the identity and the
parameter space can be identified with the observation space but this is not
the case in general for other families. Formally, we extend Model \eqref{eq:ppca-model} to
\begin{equation}
  \label{eq:pca-model} 
  \begin{split}
    (\vect{W}_i)_{i=1,\dots,n} & \,\, \text{i.i.d.}, \,\, \vect{W}_i \sim \Ncal(\vect{0}_q,\matr{I}_q) \,, \\
    \vect{Z}_i & = \vect{\mu} + \matr{B} \vect{W}_i, \, \\
    \left(Y_{ij} | Z_{ij}\right)_{i=1,\dots,,n;\; j=1,\dots,p} & \,\, 
\text{indep.}, \\ p(Y_{ij}|Z_{ij}) & = \exp\left(Y_{ij}Z_{ij} - b(Z_{ij}) - 
a(Y_{ij})\right).
  \end{split}
\end{equation}
Note in particular that $g(\Esp[Y_{ij} | Z_{ij}]) = g(b'(Z_{ij})) = Z_{ij}$ and
that an unconstrained estimate $\tilde{Z}_{ij}$ of $Z_{ij}$ is
$\tilde{Z}_{ij} = g(Y_{ij})$. The vector $\vect{\mu}$ corresponds to main 
effects, $\matr{B}$ to \emph{rescaled} loadings in the parameter spaces and 
$\vect{W}_i$ to scores of the $i$-th observation in the low-dimensional latent 
subspace. The model specified in \eqref{eq:pca-model} is 
the same as the one specified in \eqref{eq:ppca-model} but for the last data 
emission layer. Similarly to Model \eqref{eq:ppca-model}, the first two lines of Model \eqref{eq:pca-model} can be combined into 
$\vect{Z}_i$ i.i.d. such that $\vect{Z}_i \sim \Ncal(\vect{\mu}, \matr{\Sigma})$ with $\matr{\Sigma} = \tcrossprod{B}{B}$.
\begin{remark} As stated previously, $\matr{B}$ is only identifiable through $\tcrossprod{B}{B}$ and therefore at best up to rotations in $\mathbb{R}^q$. Note that this limitation is
shared with standard PCA. Intuitively, PCA finds a good $q$-dimensional affine approximation subspace $\vect{\mu} + \text{Span}(\matr{B})$ of $\matr{Y}$. But without
additional constraints infinitely many bases $\matr{B}$ can be used to 
parametrize this subspace. Orthogonality constraints and ordering of the 
principal components in decreasing order of variance are necessary to uniquely 
specify $\matr{B}$. Imposing them in standard PCA additionnally allows one to 
leverage \citeauthor{EcY36}'s theorem and reduce a $q$-dimensional 
approximation to a series of $q$ unidimensional problem. It also entails 
nestedness: the best $q$-dimensional approximation is nested within the best 
$q+1$-dimensional one and so on. There is unfortunately no equivalent in 
exponential PCA. We therefore do not force $\matr{B}$ to be orthogonal in our 
model. For visualization however, we perform orthogonalization to ensure 
consistency of the graphical outputs with standard PCA (see 
Section~\ref{sec:Visualization}).
\end{remark}

\subsection{Likelihood}

Note $\matr{Y}$ (resp. $\matr{W}$) the $n \times p$ (resp.
$n \times q$) matrix obtained by stacking the row-vectors
$\transpose{Y}_i$ (resp. $\transpose{W}_i$). Conversely, for any
matrix $\matr{A}$, $\vect{A}_i$ refers to the $i$-th row of $\matr{A}$
considered as a \emph{column} vector. In matrix expression,
$\matr{Z} = \vect{1}_n \transpose{\mu} + \tcrossprod{W}{B}$. The
observation matrix $\matr{Y}$ only depends on $\matr{Z}$ through
$\vect{\mu}$, $\matr{B}$ and $\matr{W}$ and the complete
log-likelihood is therefore
\begin{multline*}
 \log p(\matr{Y}, \matr{W} ; \vect{\mu}, \matr{B}) = \sum_{i=1}^n \log p(\vect{Y}_i | \vect{W}_i ; \vect{\mu}, \matr{B}) + \log p(\vect{W}_i) \\
 = \sum_{i=1}^n \left[ \sum_{j=1}^p Y_{ij}(\mu_j + \transpose{B}_j \vect{W}_i) - b(\mu_j + \transpose{B}_j \vect{W}_i) - a(Y_{ij}) - \sum_{k=1}^q \frac{W_{ik}^2 + \log(2\pi)}{2} \right]
\end{multline*}
which can be stated in the following compact matrix form:
\begin{multline}
 \label{eq:PCA-cond-likelihood}
 \log p(\matr{Y}, \matr{W} ; \vect{\mu}, \matr{B}) =
 \transpose{1}_n \left[\matr{Y} \odot (\vect{1}_n \transpose{\mu} +
 \tcrossprod{W}{B}) - b(\vect{1}_n \transpose{\mu} +
 \tcrossprod{W}{B})\right] \vect{1}_p \\ - \frac{\|\matr{W}\|_F^2}{2} - 
\frac{nq}{2}\log(2\pi) - K(\matr{Y}),
\end{multline}
where the function $a$ and $b$ are applied component-wise to vectors
and matrices, $\odot$ is the Hadamard product and
$K(\matr{Y}) = \transpose{1}_n a(\matr{Y}) \vect{1}_p$ is a constant
depending only on $\matr{Y}$ and not scaling with $q$.

We do not know how to integrate out $\vect{W}$ and therefore cannot
derive an analytic expression of
$\log p(\matr{Y}; \vect{\mu}, \matr{B})$.  Numerical approximation
using Hermite-Gauss quadrature or MCMC techniques are possible but
rely on computing $np$ expectations of the form
$\Esp[e^{\crossprod{a}{u} - b(\alpha + \crossprod{c}{u})}]$ for
$\vect{u} \sim \Ncal(0, \matr{I}_q)$, with $b$ nonlinear, $\vect{a}$
and $\vect{c}$ arbitrary vectors and $\alpha$ a scalar depending on
$\vect{\mu}$ and $\matr{B}$. This is likely to become computationally
prohibitive as the dimension $q$ of the latent integration space
increases. A standard EM algorithm relying on
$\Esp_{W|Y}[\log p(\matr{Y}, \matr{W} ; \vect{\mu}, \matr{B})]$ is
similarly not possible as it requires at least first and second order
of $p(\vect{W_i}|\vect{Y_i})$ which are unknown in general and as hard
to compute as the previous expectations. We resort instead to a
variational strategy and integrate out $\vect{W}$ under a tractable
approximation of $p(\matr{W} | \matr{Y})$.

\subsection{Variational bound of the likelihood}

Consider  any  product  distribution
$\tilde{p} = \otimes _{i=1}^n \tilde{p}_i$ on the $\vect{Z}_i$. The
variational approximation relies on maximizing the following lower bound over a
tractable set for $\tilde{p}$
\begin{equation*}
 \log p(\matr{Y} ; \vect{\mu}, \matr{B}) \geq J_q(\tilde{p},\vect{\mu}, \matr{B})
\end{equation*}
where
\begin{equation}
 \label{eq:pca-lower}
 \begin{aligned}
 J_q(\tilde{p}, \vect{\mu}, \matr{B}) & \eqdef \log p(\matr{Y}; \vect{\mu}, \matr{B}) - KL(\tilde{p}(\matr{W})||p(\matr{W}|\matr{Y}; \vect{\mu}, \matr{B})) \\[1ex]
 & = \Esp_{\tilde{p}} [\log p (\matr{Y}, \matr{W} ; \vect{\mu}, \matr{B}) - \log \tilde{p}(\matr{W})] \\[1ex]
 & = \sum_{i=1}^n \Esp_{\tilde{p}_i} [\log p (\vect{W}_i) + \log p(\vect{Y}_i | \vect{W}_i; \vect{\mu}, \matr{B}) - \log \tilde{p}_i(\vect{W}_i)],\\
 \end{aligned}
\end{equation}
with term-by-term inequality:
\begin{align*}
\log p(\matr{Y}_i; \vect{\mu}, \matr{B} ) & \geq J_q(\tilde{p}_i,
      \vect{\mu}, \matr{B}) \\
& \eqdef \Esp_{\tilde{p}_i} [\log p (\vect{W}_i) + \log p(\vect{Y}_i | \vect{W}_i; \vect{\mu}, \matr{B}) - \log \tilde{p}_i(\vect{W}_i)].
\end{align*}

In our variational approximation, we choose here the set $\Qcal$ of product distribution
of $q$-dimensional multivariate Gaussian with diagonal
 covariance matrices:
\begin{equation}
 \label{eq:EM-elliptic-approximation-set}
 \begin{split}
 \Qcal = \left\{\tilde{p} \eqdef \tilde{p}_{\matr{M}, \matr{S}}; \
 \tilde{p} (\matr{w}) = \prod_{i=1}^n \tilde{p}_{i}(\vect{w}_i) \right\},\\
 \text{where } \tilde{p}_{i} = \Ncal(\vect{m}_i, \diag(\vect{s}_i \odot
 \vect{s}_i)), \, (\vect{m}_i, \vect{s}_i) \in \mathbb{R}^q \times
 \mathbb{R}_+^q.
 \end{split}
\end{equation}
The $n\times q$ matrices $\matr{M}$ and $\matr{S}$ are obtained by
respectively stacking $\transpose{m}_i$ and $\transpose{s}_i$. Note that, by 
construction, $p(\matr{W} | \matr{Y})$ is a product distribution and that the 
approximation only stems from the functional form of each $\tilde{p}_i$, 
\emph{i.e.} multivariate normal with diagonal variance-covariance matrix. For 
such $\tilde{p} = \tilde{p}_{\matr{M}, \matr{S}}$, results on first and second 
order moments of multivariate Gaussian show that
\begin{multline*} 
J_q(\vect{\mu}, \matr{B}, \vect{m}_i, \vect{s}_i) \eqdef J_q(\tilde{p}_i, \vect{\mu}, \matr{B}) \\ 
= \transpose{Y}_i (\vect{\mu} + \matr{B}\vect{m}_i) - \frac12
[\|\vect{m_i}\|_2^2 + \| \vect{s}_i \|_2^2] + \frac12 (\transpose{2}_q
\log (\vect{s}_i) + q) \\ - \transpose{1}_p \Esp_{\tilde{p}_i} [ b(\vect{\mu} + \matr{B}\vect{W}_i)] - K(\matr{Y}).
\end{multline*}
Therefore,
\begin{multline}
 \label{eq:pca-vem-elliptic-likelihood} 
J_q(\vect{\mu}, \matr{B}, \matr{M}, \matr{S}) \eqdef J_q(\tilde{p}_{\matr{M}, \matr{S}}, \vect{\mu}, \matr{B}) = \sum_{i=1}^n J_q(\vect{\mu}, \matr{B}, \vect{m}_i, \vect{s}_i) \\ 
 = \transpose{1}_n \left[\matr{Y} \odot (\vect{1}_n \transpose{\mu} + \tcrossprod{M}{B}) - \Esp_{\tilde{p}} [b\left( \transpose{1}_n \vect{\mu} + \tcrossprod{W}{B} \right) ] \right] \vect{1}_p \\ 
 - \frac12 \transpose{1}_n \left[ \matr{M} \odot \matr{M} + \matr{S} \odot \matr{S} - 2\log(\matr{S}) - \matr{1}_{n, q} \right] \vect{1}_q - K(\matr{Y}).
\end{multline}

Depending on the natural exponential family and thus the exact value
of $b$ in \eqref{eq:pca-vem-elliptic-likelihood}, we may have a fully
explicit variational bound for the complete likelihood which paves the
way for efficient optimization. In particular, this is the case with
the Poisson distribution that we investigate in further details in
Section \ref{sec:poisson}.
\\

Before moving on to actual inference, we show how the framework introduced above can be extended to account for covariates and
offsets.

 
\section{Accounting for covariates and offsets} \label{sec:Extension}

Multivariate  analyses  typically   aim  at  deciphering  dependencies
between variables. Variations induced by  the effect of covariates may
be confounded with  these dependencies. Therefore, it  is desirable to
account for such  effects to focus on the  residual dependencies.  The
rational of  our approach  is to  postulate the  existence of  a model
similar  to  linear  regression  in the  \emph{parameter}  space.   We
consider the general framework  of linear regression with multivariate
outputs, which encompasses multivariate analysis of variance.

\subsection{Model and likelihood}

Suppose  that   each  observation  $i$   is  associated  to   a  known
$d$-dimensional  covariate vector  $\vect{X}_i$.  We  assume that  the
covariates  act  linearly  in  the \emph{parameter}  space  through  a
$p   \times   d$   regression  matrix   $\matr{\Theta}$,   \emph{i.e.}
$\vect{X}_i$  is linearly  related to  $\vect{Z}_i$.  It  can be  also
useful to  add an  offset to model  different sampling  efforts and/or
exposures. There  is usually one  known offset parameter  $O_{ij}$ per
observation $Y_{ij}$  and this offset  can be readily  incorporated in
our framework.  Thus, a natural generalization of \eqref{eq:pca-model} 
accounting for covariates and offsets is
\begin{equation}
  \label{eq:pca-model-with-covariates}  
  \begin{split}
    (\vect{W}_i)_{i=1,\dots,n} & \,\, \text{i.i.d.} \,\, \vect{W}_i \sim \Ncal(\vect{0}_q,\matr{I}_q) \, \\
    \vect{Z}_i & = \vect{O}_i + \matr{\Theta} \vect{X}_i + \matr{B} \vect{W}_i, \\
    \left(Y_{ij} | Z_{ij}\right)_{i=1,\dots,n; j=1,\dots,p} & \,\, \text{indep.}, \\ p(Y_{ij} | Z_{ij}) & = \exp\left(Y_{ij}Z_{ij} - b(Z_{ij}) - a(Y_{ij})\right),
  \end{split}
\end{equation}
where a column of ones can be added to the data matrix  $\mathbf{X}$ to get an intercept in the model. The log-likelihood can      be       computed      from
\eqref{eq:pca-model-with-covariates} like before to get

\begin{multline}
  \label{eq:pca-likelihood-with-covariates-and-offset}
  \log p(\matr{Y}, \matr{W} ; \matr{B}, \matr{\Theta}, \matr{O}) \\ 
  = \transpose{1}_n \left[\matr{Y} \odot (\matr{O} + \tcrossprod{X}{\Theta} + \tcrossprod{W}{B}) - b(\matr{O} + \tcrossprod{X}{\Theta} + \tcrossprod{W}{B})\right] \vect{1}_p \\ 
  -    \frac{\|\matr{W}\|_F^2}{2}    -   \frac{nq}{2}\log(2\pi)    -
  K(\matr{Y}), 
\end{multline}
where  the focus  of inference  is on  $\matr{B}$ and  $\matr{\Theta}$
while $\matr{O}$ is known.

\subsection{Variational bound of the likelihood}

We  can  use  the  variational class  $\Qcal$  previously  defined  in
\eqref{eq:EM-elliptic-approximation-set} to lower bound the likelihood
from   Eq.~\eqref{eq:pca-likelihood-with-covariates-and-offset}.    We
first introduce  the instrumental matrix $\matr{A}$,  which appears in
many equations.
\begin{equation}
   \label{eq:var-conditional-expectation-general}
  \begin{aligned}
  \matr{A} & = \Esp_{\tilde{p}}[ b \left(\matr{O} + \tcrossprod{X}{\Theta} + \tcrossprod{W}{B} \right) ] \\
           & = \Esp[ b \left( \matr{O} + \tcrossprod{X}{\Theta} + (\matr{M} + \matr{S} \odot \matr{U})\transpose{B} \right) ] 
           & = \Esp[ b \left( \matr{V} \right) ], 
  \end{aligned}
\end{equation}
where
$\matr{V} = \left( \matr{O} + \tcrossprod{X}{\Theta} + (\matr{M} +
  \matr{S} \odot \matr{U})\transpose{B} \right)$
and $\matr{U}$ is a $n \times q$ matrix with unit variance independent
Gaussian components. $\matr{V}$ can be interpreted as the variational
counterpart of $\matr{Z}$.

Since $\matr{O}$ is known, we drop  it from the arguments of $J_q$ and
obtain  the  following  lower  bound, which  extends  the  bound  from
Eq.~\eqref{eq:pca-vem-elliptic-likelihood}:
\begin{multline}
  \label{eq:pca-vem-elliptic-likelihood-general}  
  J_q(\matr{\Theta}, \matr{B}, \matr{M}, \matr{S}) 
  = \transpose{1}_n \Big(\matr{Y} \odot (\matr{O} + \tcrossprod{X}{\Theta} + \tcrossprod{M}{B}) - \matr{A} \Big) \vect{1}_p \\ 
  - \frac12 \transpose{1}_n \left[ \matr{M} \odot \matr{M} + \matr{S} \odot \matr{S} - 2\log(\matr{S}) - \matr{1}_{n, q} \right] \vect{1}_q - K(\matr{Y}).
\end{multline}


\section{Inference} 
\label{sec:Inference}

As usual in the variational framework, we aim to maximize the lower
bound $J_q$ which we call the objective function in an optimization
perspective. The optimization shall be performed on
$\matr{\Theta}, \matr{B}, \matr{M}, \matr{S}$. We only give results in
the most general case \eqref{eq:pca-vem-elliptic-likelihood-general}
with covariates and offsets. All other cases are deduced by setting
$\matr{O} = \mathbf{0}_{n \times p}$ and/or $\matr{X} = \vect{1}_n$
hereafter.

\subsection{Inference strategy}

We first highlight the biconcavity of the objective function $J_q$.
The major part of the proof is postponed to Appendix
\ref{app:convexity}.
\begin{proposition}
The variational objective function $J_q(\matr{\Theta}, \matr{B}, \matr{M}, \matr{S})$ is concave in $(\matr{\Theta}, \matr{B})$ for $(\matr{M}, \matr{S})$ fixed and vice-versa. 
\end{proposition}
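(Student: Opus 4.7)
The plan is to inspect $J_q$ term by term, reduce concavity in each block of variables to the convexity of $b$, and use the fact that affine changes of variables and expectations preserve convexity.

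First I would rewrite $J_q(\matr{\Theta},\matr{B},\matr{M},\matr{S})$ in \eqref{eq:pca-vem-elliptic-likelihood-general} as a sum of four pieces: a bilinear data fit term $T_1=\transpose{1}_n[\matr{Y}\odot(\matr{O}+\tcrossprod{X}{\Theta}+\tcrossprod{M}{B})]\vect{1}_p$, the minus expectation term $T_2=-\transpose{1}_n\matr{A}\vect{1}_p$, the quadratic/log term $T_3=-\tfrac12\transpose{1}_n[\matr{M}\odot\matr{M}+\matr{S}\odot\matr{S}-2\log\matr{S}-\matr{1}_{n,q}]\vect{1}_q$, and the constant $-K(\matr{Y})$. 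Then I would identify which pieces depend on which blocks: $T_1$ is multilinear in $(\matr{\Theta},\matr{B},\matr{M})$; $T_3$ depends only on $(\matr{M},\matr{S})$; and $T_2$ depends on all four through $\matr{A}=\Esp[b(\matr{V})]$ with $\matr{V}=\matr{O}+\tcrossprod{X}{\Theta}+(\matr{M}+\matr{S}\odot\matr{U})\transpose{B}$.

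For the first block, freeze $(\matr{M},\matr{S})$. Then each entry $V_{ij}=O_{ij}+\sum_l X_{il}\Theta_{jl}+\sum_k(M_{ik}+S_{ik}U_{ik})B_{jk}$ is affine in $(\matr{\Theta},\matr{B})$ for every realization of $\matr{U}$, so $b(V_{ij})$ is convex in $(\matr{\Theta},\matr{B})$ by the NEF convexity of $b$ recalled in Section~2.2 together with the fact that composition with an affine map preserves convexity. Taking the expectation over $\matr{U}$ preserves convexity, giving convexity of each entry of $\matr{A}$, hence concavity of $T_2$. Since $T_1$ is linear and $T_3$ does not depend on $(\matr{\Theta},\matr{B})$, concavity of $J_q$ in $(\matr{\Theta},\matr{B})$ follows.

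For the second block, freeze $(\matr{\Theta},\matr{B})$. The same formula for $V_{ij}$ is now affine in $(\matr{M},\matr{S})$ for each realization of $\matr{U}$ (because $\matr{S}\odot\matr{U}$ is linear in $\matr{S}$ when $\matr{U}$ is fixed), so by the same composition and expectation argument, $-\matr{A}$ and therefore $T_2$ is concave in $(\matr{M},\matr{S})$. The term $T_1$ is linear in $\matr{M}$ and constant in $\matr{S}$; in $T_3$ the entrywise squares $-\tfrac12\matr{M}\odot\matr{M}$ and $-\tfrac12\matr{S}\odot\matr{S}$ are concave, and $\transpose{1}_n\log(\matr{S})\vect{1}_q$ is concave on $\mathbb{R}_+^{n\times q}$. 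Summing concave functions yields concavity of $J_q$ in $(\matr{M},\matr{S})$.

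I expect no real obstacle beyond the bookkeeping; the only substantive point is that expectation preserves convexity of $b(\matr{V})$, and since $b$ here is a scalar convex function applied entrywise, this reduces to the elementary scalar fact. The more explicit manipulation of $\matr{A}$ that the authors defer to Appendix~\ref{app:convexity} is presumably an alternative route through closed-form expressions of $\Esp[b(\matr{V})]$ for specific NEFs (e.g.\ the Poisson case of Section~\ref{sec:poisson}), which I would only invoke if the generic ``convex composed with affine, then expectation'' argument were deemed too abstract for the reader.
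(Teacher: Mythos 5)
Your proof is correct, and it reaches the same decomposition of $J_q$ as the paper (explicit linear/quadratic/log terms handled by inspection, with the whole difficulty concentrated in the non-explicit term $-\transpose{1}_n\matr{A}\vect{1}_p$), but it handles that key term by a genuinely different argument. The paper's Appendix~\ref{app:convexity} proves Lemma~\ref{Lem:ConvExp} by differentiating under the expectation twice, writing out the blockwise Hessians $\matr{\Psi}_1$ and $\matr{\Psi}_2$, and verifying positive semidefiniteness through the quadratic form $\Esp[f''(Z)(\cdot)^2]\geq 0$; Lemma~\ref{Lem:ConvExpComp} then sums these over entries. You instead observe that each entry $V_{ij}$ is affine in $(\matr{\Theta},\matr{B})$ (resp.\ in $(\matr{M},\matr{S})$) for every fixed realization of $\matr{U}$, so $b(V_{ij})$ is convex as a convex function precomposed with an affine map, and the expectation over $\matr{U}$ preserves convexity. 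Your route is more elementary and slightly more general: it needs only convexity of $b$, not twice-differentiability, and it sidesteps the (unjustified in the paper) interchange of differentiation and expectation. What the paper's computation buys in exchange is the explicit first- and second-order derivatives of the expectation term, which feed directly into the gradient formulas of Section~\ref{sec:app-gradient}. One small correction: your closing guess about the content of the appendix is off --- it is not a collection of closed-form expectations for specific NEFs but precisely the general Hessian-positivity argument described above; this does not affect the validity of your proof.
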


\begin{proof}
 Fix  $(\matr{M},  \matr{S})$  in
 \eqref{eq:pca-vem-elliptic-likelihood-general}. The non explicit
 part of $J_q$, that is to say
 $- \transpose{1}_n \matr{A} \vect{1}_p$, is concave in
 $(\matr{\Theta}, \matr{B})$ thanks to Lemma~\ref{Lem:ConvExpComp}
 (see Appendix \ref{app:convexity}).
 By inspection, the explicit part of $J_q$ involves linear, quadratic
 and concave functions of $(\matr{\Theta}, \matr{B})$ and is also
 concave. The objective $J_q$ is therefore concave in
 $(\matr{\Theta}, \matr{B})$. The same is true for
 $(\matr{M}, \matr{S})$ when fixing $(\matr{\Theta}, \matr{B})$.
\end{proof}

A standard approach for maximizing biconcave functions is block
coordinate descents, of which the Expectation-Maximization (EM)
algorithm is a popular representative in the latent variable setting.
It is especially powerful when we have access to closed formula for
both the optimal $(\matr{M}, \matr{S})$ given
$(\matr{\Theta}, \matr{B})$ (E-step) and the optimal
$(\matr{\Theta}, \matr{B})$ given $(\matr{M}, \matr{S})$ (M-step).
However, the non-linear nature of
$\Esp_{\widetilde{p}} [b\left( \matr{O} + \tcrossprod{X}{\Theta} +
  \tcrossprod{W}{B} \right)]$
combined with careful inspection of the objective function $J_q$ shows
that setting the derivatives of $J_q$ to zero, even after fixing the
variational or model parameters, does not lead to closed formula
for $(\matr{M}, \matr{S})$ nor for $(\matr{B}, \matr{\Theta})$.
Nevertheless, since we may derive convenient expressions for the
gradient $\vect{\nabla} J_q$ (see next
Section~\ref{sec:app-gradient}), we propose to rely on the
globally-convergent method-of-moving-asymptotes (MMA) algorithm for
gradient-based local optimization introduced by
\cite{svanberg2002class} and implemented in the NLOPT optimization
library \citep{nlopt}. In the general case
\eqref{eq:pca-vem-elliptic-likelihood-general}, the total number of
parameters to optimize
$J_q(\matr{\Theta}, \matr{B}, \matr{M}, \matr{S})$ is
$p (d+q) + 2 n q$. We use box constraints for the variational
parameters $\matr{S}$ (\emph{i.e.} the standard deviations in
\eqref{eq:EM-elliptic-approximation-set} and thus only defined on
$\mathbb{R}_+^q$). The starting point is chosen according to the exact
value of $b$.

\subsection{Blockwise gradients of $J_q$} \label{sec:app-gradient}

The  blockwise  gradient  of
$J_q(\matr{\Theta}, \matr{B}, \matr{M}, \matr{S})$ can be expressed
compactly in matrix notations. We skip the tedious but straightforward
derivations and present only the resulting partial gradients. We
introduce $\matr{A}'= \Esp[ b' \left( \matr{V} \right) ]$, the natural
counterpart to matrix $\matr{A}$ given in
\eqref{eq:var-conditional-expectation-general}.  Intuitively,
$A'_{ij}$ is the conditional expectation of $Y_{ij}$ under
$\widetilde{p}_i$. On top of that, we need two other matrices denoted
$\matr{A}'_1$ and $\matr{A}'_2$, defined as follows:
\begin{equation*}
 \matr{A}'_1 = \Esp[ b' \left( \matr{V} \right)^\trans (\matr{S} \odot \matr{U}) ], 
 \quad \quad
 \matr{A}'_2 = \Esp[ (b' \left( \matr{V} \right) \matr{B}) \odot \matr{U} ].
\end{equation*}

With those matrices the derivatives of $J_q$ can be expressed compactly as
\begin{equation}
 \label{eq:var-approx-gradient-general}
 \begin{array}{rcl@{\qquad}rcl}
 \displaystyle \frac{\partial J_q}{\partial \matr{\Theta}} & = & (\matr{Y} - \matr{A'})^\trans \matr{X}, &
 \displaystyle \frac{\partial J_q}{\partial \matr{B}} & = & (\matr{Y} - 
\matr{A'})^\trans \matr{M} - \matr{A'}_1, \\[2ex]
 \displaystyle \frac{\partial J_q}{\partial \matr{M}} & = & (\matr{Y} - \matr{A'}) \matr{B} - \matr{M}, &
 \displaystyle \frac{\partial J_q}{\partial \matr{S}} & = & \displaystyle \left[ \matr{S}^{\oslash} - \matr{A'}_2 - \matr{S} \right],
 \end{array}
\end{equation}
where the $n\times q$ matrix $\matr{S}^{\oslash}$ is the elementwise
inverse of $\matr{S}$, i.e. $S_{ij}^{\oslash} = S_{ij}^{-1}$ for all
$i=1,\dots, n$, $q=1,\dots,Q$.

In the following, the resulting parameter estimates will be denoted by 
$\widehat{\matr{\Theta}}$ and $\widehat{\matr{B}}$, and the optimal variational 
parameters will be denoted by $\widetilde{\matr{M}}$ and 
$\widetilde{\matr{S}}$. We use different notation on purpose, in order to 
distinguish model parameters from variational ones.

\subsection{About missing data} When the data are missing at random
(MAR), the sampling does not disturb the inference and it is
sufficient to maximize the likelihood on the observed part of the data
\citep{LR14}. Our model can easily handle missing data under MAR
conditions as follows: note
$\Omega \subset \{1,\dots,n\} \times \{1,\dots,p\}$ the set of
observed data and $\matr{\Omega}$ the matrix where $\Omega_{ij} = 1$
if $(i,j) \in \Omega$ and $0$ otherwise. With this matrix $\Omega$,
the likelihood can be adapted from
Equation~\eqref{eq:pca-likelihood-with-covariates-and-offset}, and one
has
\begin{multline*}
 \log p(\matr{Y}, \matr{W} ; \matr{B}, \matr{\Theta}, \matr{O}) \\ 
 = \transpose{1}_n \Big( \big(\matr{Y} \odot (\matr{O} +
 \tcrossprod{X}{\Theta} + \tcrossprod{W}{B}) - b(\matr{O} +
 \tcrossprod{X}{\Theta} + \tcrossprod{W}{B})\big) \odot
 \matr{\Omega} \Big) \vect{1}_p \\ 
 - \frac{\|\matr{W}\|_F^2}{2} - \frac{nq}{2}\log(2\pi) - \tr(\transpose{\Omega} a(\matr{Y})).
\end{multline*} 
The corresponding variational bound $J_q$ and its partial derivatives
are  then  simple  adaptations  from
Equations~\eqref{eq:pca-vem-elliptic-likelihood-general}  and
\eqref{eq:var-approx-gradient-general} where $\matr{Y}$ (resp.
$\matr{A}$, $\matr{A'}$)  is replaced with
$\matr{Y} \odot \matr{\Omega}$ (resp. $\matr{A} \odot \matr{\Omega}$,
$\matr{A'} \odot \matr{\Omega}$).

Note that it is strictly equivalent for the optimization method to use
$(\matr{Y} - \matr{A'})\odot \matr{\Omega}$ or to impute missing
$Y_{ij}$ with $A'_{ij}$ before using
Eq.~\eqref{eq:var-approx-gradient-general}. Since $\matr{A'}$ is
computed as part of the gradient computation at each step, imputation
of missing data is essentially a free by-product of the optimization
method. Finally, note that $A'_{ij} = \Esp_{\widetilde{p}_i}[Y_{ij}]$
so that the imputation makes intuitive sense: we are imputing $Y_{ij}$
with its conditional expectation under the current variational
parameters.  Addressing not MAR conditions requires to take into
account the sampling process leading to missing data in order to
correctly unbias the estimation. This is out of the scope of this
paper.


\subsection{Variance estimation}
As mentioned above, only $\matr{\Theta}$ and $\matr{\Sigma}$ are identifiable parameters and an estimate of the later needs to be derived. Recall that Model \eqref{eq:pca-model-with-covariates} can be rephrased as $\vect{Z}_i \sim \Ncal(\vect{O}_i + \matr{\Theta} \vect{X}_i, \matr{\Sigma})$: it can be checked that the corresponding variational lower bound is maximal for
$$
 \widehat{\matr{\Sigma}} = \frac1n \sum_i \Esp_{\tilde{p}}\left[ (\vect{Z}_i - \vect{O}_i - \widehat{\matr{\Theta}} \vect{X}_i) (\vect{Z}_i - \vect{O}_i - \widehat{\matr{\Theta}} \vect{X}_i)^\trans \right].
$$
Since $\Esp_{\tilde{p}}(\vect{Z}_i) = \vect{O}_i + \widehat{\matr{\Theta}}
\vect{X}_i + \widetilde{\vect{M}}_i \widehat{\matr{B}}^\trans$ and
$\Var_{\tilde{p}}(\vect{Z}_i) = \widehat{\matr{B}}
\diag(\widetilde{\vect{s}}_i \odot \widetilde{\vect{s}}_i)
\widehat{\matr{B}}^\trans$, we get
$$
 \widehat{\matr{\Sigma}} = \widehat{\matr{B}} \left(\frac1n \widetilde{\matr{M}}^\trans\widetilde{\matr{M}} + \bar{\matr{S}} \right) \widehat{\matr{B}}^\trans
$$
where
$\bar{\matr{S}} = n^{-1} \diag[\vect{1}_n^\trans (\widetilde{\matr{S}}
\odot \widetilde{\matr{S}}) ]$.
Observe that $\widehat{\matr{\Sigma}}$ has rank $q$ by construction.

\subsection{Model selection}
The dimension $q$ of the latent space itself needs to be estimated. To this aim, we adopt a penalized-likelihood approach, replacing the log-likelihood by its lower bound $J_q$. We consider two classical criteria: BIC \citep{Sch78} and ICL \citep{BCG00}. We remind that ICL uses the conditional entropy of the latent variables given the observations as an additional penalty with respect to BIC. The difference between BIC and ICL measures the uncertainty of the representation of the observations in the latent space.

Because the true conditional distribution $p(\matr{W}|\matr{Y})$ is intractable, we replace it with its variational approximation $\widetilde{p}(\matr{W})$ to evaluate this entropy. The number of parameters in our model is $p(q + d)$ and the entropy of each $W_i$ under $\widetilde{p}_i$ is $q \log(2\pi e) / 2 + \sum_j \log(s_{ij})$. Based on this we define the following approximate BIC and ICL criteria:
\begin{equation}
 \label{eq:model-selection}
 \begin{aligned}
 BIC(q) & = J_q - \frac{1}{2}p(d+q)\log(n) \\
 ICL(q) & = J_q - \frac{1}{2}p(d+q)\log(n) - \frac{nq}{2} \log(2\pi e) - \transpose{1}_n \log(\matr{S}) \vect{1}_q \\
 \end{aligned}
\end{equation}
 
\section{Poisson Family}
\label{sec:poisson}

Each term of the expectation matrix $\matr{A}$ in
\eqref{eq:var-conditional-expectation-general} can be reduced to
computing expectations of the form $\Esp[b(a + cU)]$ for a convex
analytic function $b$, a standard Gaussian $U \sim \Ncal(0, 1)$ and
arbitrary scalars $(a, c) \in \mathbb{R}^2$. It can therefore be
computed numerically efficiently using Gauss-Hermite quadrature
\citep[see, e.g.,][]{NumericalRecipies}. However in the special case
of Poisson-distributed observations, $b(x) = e^x$ and most of the
expectations can be computed analytically leading to explicit formulas
for Equations~\eqref{eq:var-conditional-expectation-general}, 
\eqref{eq:pca-vem-elliptic-likelihood-general} and \eqref{eq:var-approx-gradient-general}.

\subsection{Some features of Poisson \pPCA} \label{subsec:featuresPoisson}

The Poisson \pPCA inherits some properties of the Poisson-lognormal
distribution, which states that the response vector $\vect{Y}_i$ for
sample $i$ is generated such that
$\vect{Z}_i \sim \Ncal(\vect{\mu}_i, \matr{\Sigma})$ and the
$(Y_{ij})_j$ are independent conditionally on $\vect{Z}_i$ with
$Y_{ij} | Z_{ij} \sim \Pcal(\exp(Z_{ij}))$. The moments of the
$Y_{ij}$'s are then
\begin{gather*}
  \Esp Y_{ij} = e^{\mu_j + \sigma^2_j/2},\qquad \Var Y_{ij} = \Esp Y_{ij} + (e^{\sigma^2_j} - 1) (\Esp Y_{ij})^2,\\
  \Cov(Y_{ij}, Y_{ik}) = (e^{\sigma_{jk}} - 1) \Esp Y_{ij} \Esp Y_{ik}.\\
\end{gather*}
Consequently, the Poisson-lognormal model displays both
over-dispersion of each coordinate with respect to a Poisson
distribution and pairwise correlations of arbitrary signs. In Poisson
\pPCA, $\matr{\Sigma}$ is further assumed to have a low rank.

\subsection{Explicit form of $\matr{A}$, $J_q$, and $\vect{\nabla}J_q$}

In the Poisson-case, the variational expectation of the non-linear
part involving $b$ -- the matrix of conditional expectations
$\matr{A}$ -- is equal to $\matr{A'}$ and can be expressed as
\begin{equation*}
\matr{A} = \matr{A}' = \exp\left(\matr{O} + \tcrossprod{X}{\Theta} + \tcrossprod{M}{B} + \frac12 (\matr{S} \odot \matr{S}) (\matr{B} \odot \matr{B})^\trans \right).
\end{equation*}
The lower bound $J_q$ and matrices $\matr{A}'_1, \matr{A}'_2$ appearing in \eqref{eq:var-approx-gradient-general} can be
expressed simply from $\matr{A}$ as
$$
\matr{A}'_1 = [ \transpose{A}(\matr{S} \odot \matr{S}) ] \odot \matr{B}, 
\qquad
\matr{A}'_2 = 2 [\matr{A} (\matr{B} \odot \matr{B})] \odot \matr{S}.
$$

\subsection{Implementation details}

We implemented our inference algorithm for the Poisson family in the
\texttt{R} package \textbf{PLNmodels}, the last version of which is
available on github \url{https://github.com/jchiquet/PLNmodels}.
Maximization of the variational bound $J_q$ is done using the
implementation found in the \textbf{nlopt} library \citep{nlopt} of
the globally-convergent method-of-moving-asymptotes algorithm for
gradient-based local optimization \citep{svanberg2002class}. We
interface this algorithm to R \citep{R} via the \textbf{nloptr}
package\citep{nloptr} and careful tuning of the parameters.  All
graphics are produced using the \textbf{ggplot2} package
\citep{ggplot2}.

The choice of a good starting value is crucial in iterative procedures
as it helps the algorithm to start in the attractor field of a good
local maximum and can substantially speed-up convergence. Here we
initialize $(\matr{\Theta}, \matr{B})$ by fitting a linear model to
$\log(1+\matr{Y})$, then extracting the regression coefficients
$\matr{\Theta}_{LM}$ and the variance-covariance matrix
$\matr{\Sigma}_{LM}$ of the Pearson residuals. We set
$\matr{\Theta}_0 = \matr{\Theta}_{LM}$ and
$\matr{B}_0 = (\matr{\Sigma}_{LM}^{(q)})^{1/2}$ where
$\matr{\Sigma}_{LM}^{(q)}$ is the best rank $q$ approximation of
$\matr{\Sigma}_{LM}$, as given by keeping the first $q$-dimensions of
a SVD of $\matr{\Sigma}_{LM}$. We set the other starting values as
$\matr{M}_0 = \matr{S}_0 = \matr{0}_{n \times q}$.

\section{Visualization} \label{sec:Visualization}

\subsection{Specific issues in non-Gaussian PCA}

PCA is routinely used to visualize samples in a low dimensional space.
Vizualisation in exponential PCA shares many similarities with
visualization in standard PCA, but important differences arise from
the lack of validity of \cite{EcY36}'s theorem in this setting.
\begin{enumerate}[($i$)]
\item In general, the parameter space {$\Rbb^p$} defined in
  \eqref{eq:pca-model} is different from the observation space
  {$\Nbb^p$}, as opposed to the special case of Gaussian PCA.
 \item Since principal components are not reconstructed incrementally,
   the corresponding subspaces need not be nested.
 \item The lack of constraints on $\matr{B}$ means that \emph{raw}
   scores may be correlated in the latent space, unlike their
   counterparts in standard PCA.
\end{enumerate}

To address point ($i$) we provide representations in the parameter
space as it has the Euclidean geometry practitioners are most familiar
with. Point $(ii)$ is an inherent consequence of non-linearity that
has some consequences in terms of interpretation. Indeed, the 'axis of
maximum variance' of model with rank $q$ is not the same as the first
axis of model with rank $q+1$. As for point $(iii)$, we use an
orthonormal coordinate system to represent samples in the Euclidean
parameters rather than the ``raw'' results of the algorithms. The
samples positions $\vect{Z}$ can be estimated with
$\tildeZ:=\matr{O} + \matr{X} \transpose{\hatTheta} + \tildeM
\hatB^\trans $.
$\tildeZ$ is useful to assess goodness of fit and quality of the
dimension reduction whereas $\tildeP = \tildeM \hatB^\trans$ is used
to visualize and explore structure not already captured by the
covariates.

\subsection{Quality of the dimension reduction}

A first important criterion in PCA is the amount of information that is preserved by the $q$-dimensional reduction. To this aim, we define a pseudo $R^2$ criterion, which compares the model at play to both a \emph{null} model with no latent variables and a \emph{saturated} model with one parameter per observation.

Formally, we define the matrix $\matr{\Lambda}^{(q)} = [\lambda_{ij}^{(q)}]$ where entry $\lambda_{ij}^{(q)} : = \widetilde{Z}_{ij}$ serves as an estimate of the canonical parameter of the distribution of $Y_{ij}$ given in \eqref{eq:exp-fam-dist}. We can thus define the log-likelihood $\ell_q$ of the observed data with
\begin{equation*}
\ell_q = \sum_{i=1}^n \sum_{j=1}^p [Y_{ij}\lambda_{ij}^{(q)} - \exp(\lambda_{ij}^{(q)}) ] - K(\matr{Y}).
\end{equation*}
We can compare it to the log-likelihood of the saturated model $\ell_{\max}$
(replacing $\lambda_{ij}^{(q)}$ with $\lambda_{ij}^{\max} := \log(Y_{ij})$)
and the log-likelihood $\ell_{\min}$ of the null model chosen here as a Poisson regression GLM with no latent structure, (replacing $\lambda_{ij}$ with $\lambda_{ij}^{\min} := o_{ij} + \hatTheta \vect{X}_i$, where $\hatTheta$ is estimated using a standard GLM). 
The resulting pseudo $R^2$ is defined as
\begin{equation}
 \label{eq:r-squared}
 R^2_q = ({\ell_q - \ell_{\min}}) \left/ ({\ell_{\max} - \ell_{\min}}) \right..
\end{equation}
This $R^2$ is a bit imperfect as it assumes Poisson counts, unlike the 
Poisson-lognormal in our model but it is necessary to compute equivalents to 
\emph{percentage of variance explained} that practitioners have grown accustomed 
to.

\subsection{Visualizing the latent structure}

The matrix $\tildeP	 = \tildeM \hatB^\trans$ encodes positions of the 
samples in the latent space using $\hatB$ as basis and $\tildeM$ as principal 
components. Since $\hatB$ is not constrained whatsoever, the \emph{raw} 
components are neither orthogonal nor sorted in decreasing order of variation. 
We therefore decompose $\tildeP$ as $\tildeP =  
\tildeM_{\text{viz}}\hatB^\trans_{\text{viz}}$ with columns of $\hatB_{\text{viz}}$ orthogonal and 
columns of 
$\tildeM_{\text{viz}}$ sorted in decreasing order of variation, and use 
$\tildeM_{\text{viz}}$ as principal components for visualization purposes. 
Since $\tildeP$ is already of low-rank $q$, this is achieved simply by doing a 
standard PCA of $\tildeP$. Note also that using 
either $(\hatB_{\text{viz}}, \tildeM_{\text{viz}})$ or $(\hatB, \tildeM)$ 
leaves $\hatSigma$ unchanged.

We then decompose the total variance along each component $j$ as in standard PCA. The overall contribution of axis $j$ is then $d_j \times R^2_q$, where $d_j$ is the fraction of variance in the latent space explained by component $j$.
Following the same line, we may plot the correlations between the columns of $\tildeP$ and the components arising from its PCA, to help with the interpretation of these components in terms of original variables.
\section{Illustrations} \label{sec:Illustration}

In order to highlight the scalability of our variational approach for
generalized pPCA and its flexibility for the inclusion of covariates,
we analyze two microbiome sequence count datasets below.  They
consists in counts of microbial species (OTUs or Operational Taxonomic
Units) in a series of samples. Note that an intrinsic limitation of
this sampling technology and of all marker-gene based metabarcoding
analysis methods is that they do not give access to absolute cell
counts in a sample. Indeed, these methods consist in sampling the DNA
in the biological sample and only a fixed number of DNA fragments,
referred to as the 'sequencing depth', is observed. Consequently, any
multivariate analysis of such data aims to describe the dependencies
between relative abundances \citep{Tsilimigras2016, Gloor2017},
although some additional measures can be made to recover approximate
absolute abundances \citep{Smets2015, Vandeputte2017}.

Because of the different technical steps involved in library
preparation, the sequencing depth is generally independent of the
total cell count in the sample, and its variations across samples have
no biological meaning. Therefore, the sequencing depth itself
constitutes a nuisance parameter that we need to account for to avoid
spurious correlations. To correct for varying depths across samples,
we assume that average counts scale linearly with sequencing depth,
although more sophisticated normalizations exist \citep{Chen2017}. In
subsequent analysis, the sequencing depth is just another covariate
with a special status as we know its regression coefficient and we
therefore include it as an offset in the model.  Offset is computed as
the total sequencing depth before any filter is applied to OTUs. By
doing so, the observed counts within a sample are not linearly
constrained to sum to sequencing depth.

\subsection{Impact of weaning on piglet microbiome}

\paragraph{Description of the experiment}

We considered the metagenomic dataset introduced in \cite{Mach2015}.
The dataset was obtained by sequencing the bacterial communities
collected from the feces of 31 piglets at 5 points after birth
($n = 155$). The communities were sequenced using the hypervariable
V3-V4 region of the 16S rNRA gene as metabarcoding marker gene and
sequences were processed and clustered at the 97\% identity level to
form $p = 4031$ OTUs (see \citet{Mach2015} for details of
bioinformatics preprocessing). The dataset is thus a $155 \times 4031$
count table where entry $(i, j)$ measures the relative abundance of
OTU $j$ in sample $i$ as the number of sequences (originating from
sample $i$) falling in sequence cluster $j$. One aim of this experiment
is to understand the impact of weaning on gut microbiota. Weaning, and
more generally diet changes, are well-documented to strongly impact
the gut microbiota and we therefore use weaning status as ground truth
to check whether our method can recover known structure. We also use
the example to test scalability and study how the method behaves when
the number $p$ of variables increases.

\paragraph{Numerical Experiments}

To test the impact of the number of variables on the dimension of the
latent subspace, we inferred $q$ on nested subsets of the count
table. We selected only the 3000, 2000, 1000, 500 and 100 most
abundant OTUs and fitted a model with appropriate offset to each
subset.  The offsets were chosen as log-total read count of each
sample, computed on the full OTU table. It reflects the fact that,
\emph{et ceteris paribus}, observed counts should be roughly twice as
high in communities sequenced twice more. For context, the 2500 least
abundant OTUs exhibit very high sparsity (less than 1\% of non-null
counts): each has total abundance lower than 5 and more than half
(1287) are seen only once.  It is customary to remove
such OTUs using abundance-based filters in microbiome studies. We
expect them to behave like high-dimensional noise and strongly degrade
structure recovery.

Figure~\ref{Fig:Nuria0} shows that running times increase sublinearly
with $q$ and linearly with $p$, as expected. Figure~\ref{Fig:Nuria1}
additionnally shows that low count OTUs act as high dimensional noise
and hamper our ability to recover fine structure in the latent space,
(the pseudo $R^2$ goes down from 95\% and 68\% and $\hat{q}$ from $27$
to $8$) just like it would in high dimensional Gaussian PCA.

\begin{figure}[h!]
  \centering
  \begin{tabular}{@{}c@{}}
    \hspace{-3em} running time (seconds) \\
    \includegraphics[width=.6\textwidth]{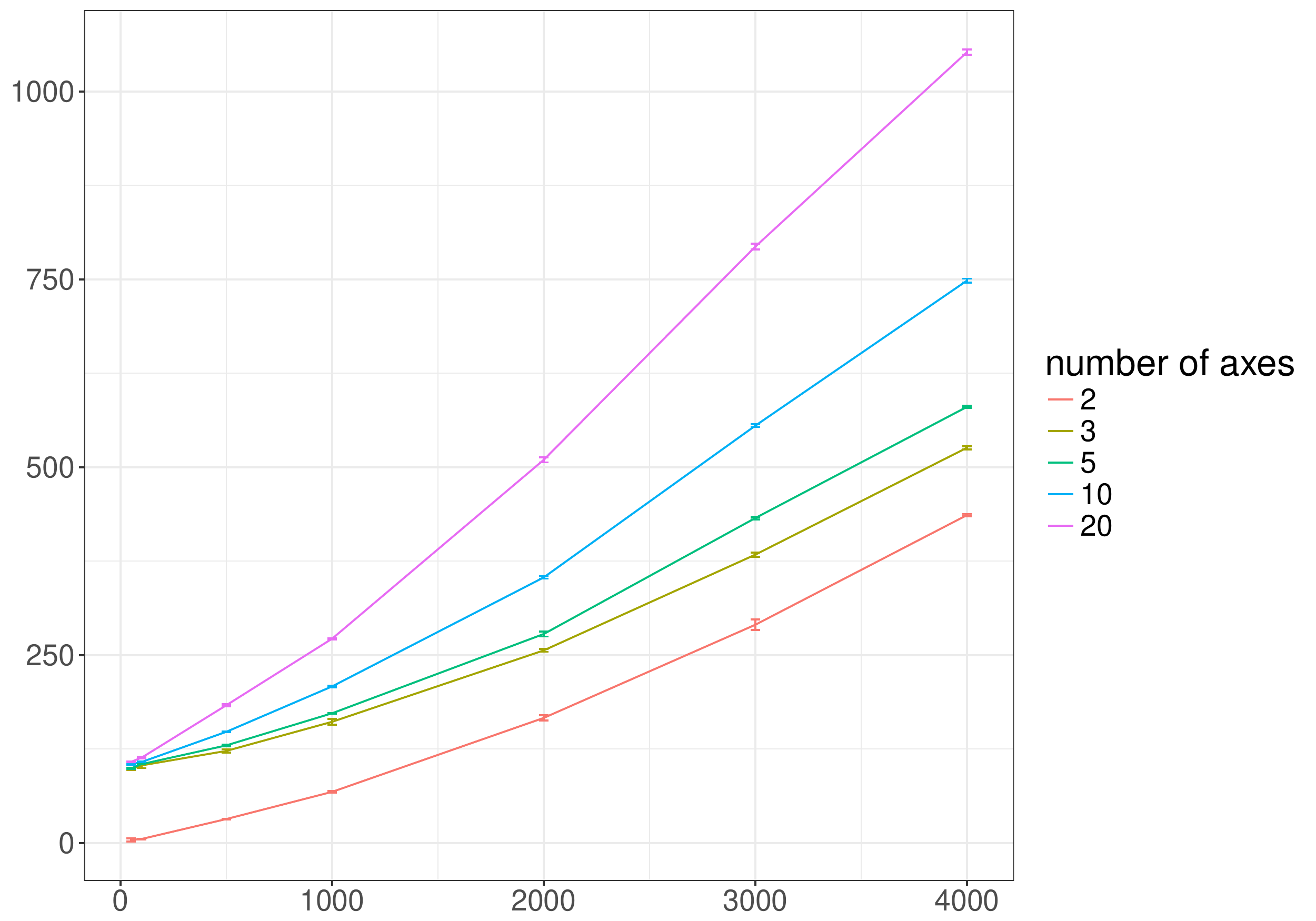} \\
    \hspace{-3em} number of variables \\
  \end{tabular}
  \caption{Dataset from \cite{Mach2015}. Running times averaged over 4
    replicates of the \texttt{PLNPCA} function in \texttt{R}
    \textbf{PLNmodels} package. Single core Intel i7-4600U CPU
    2.33GHz, \texttt{R} 3.4.1, Linux Ubuntu 16.04.}
  \label{Fig:Nuria0}
\end{figure}

\begin{figure}[h!]
  \centering
  \begin{tabular}{@{}ccc@{}}
    $\log_{10}(\text{abundance})$ & $R^2$ criterion & chosen rank $\hat{q}$ \\
    \includegraphics[width=.3\textwidth]{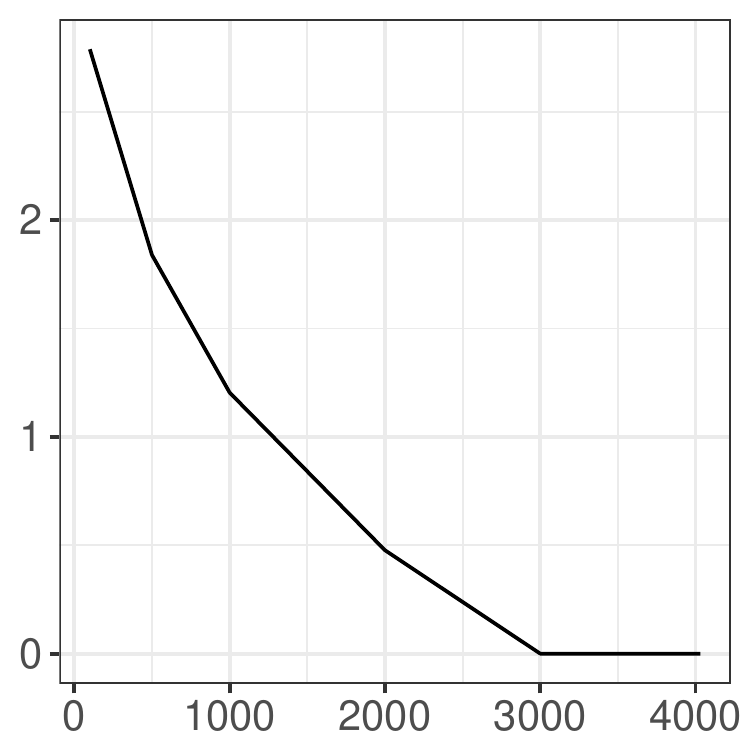} &
    \includegraphics[width=.3\textwidth]{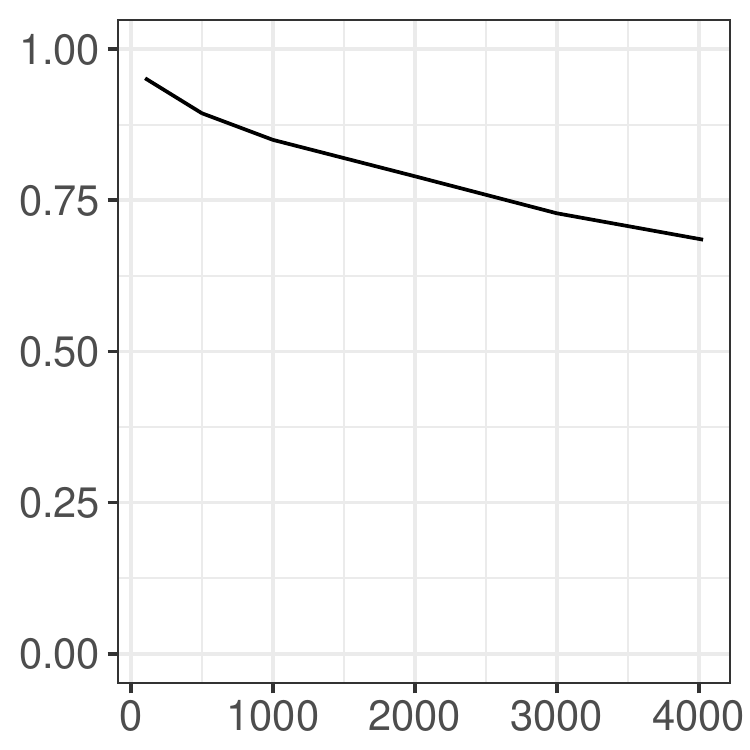} &
    \includegraphics[width=.3\textwidth]{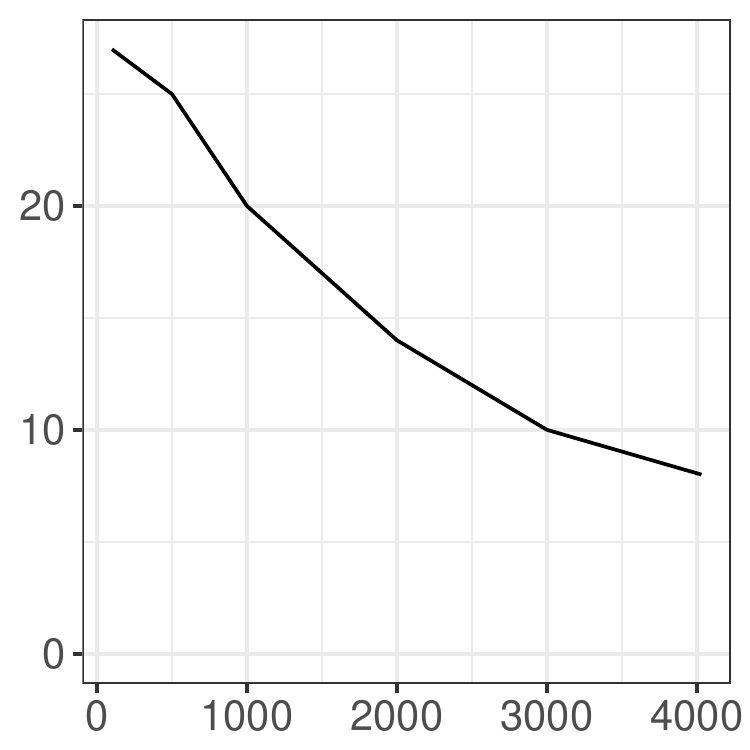} \\
     \multicolumn{3}{c}{number of variables} \\
  \end{tabular}  
  
  \caption{Dataset from \cite{Mach2015}. The minimum overall abundance
    of included OTUs (left panel), quality of approximation $R^2_q$
    (central panel) and selected value $\hat{q}$ (right panel)
    decreases when OTUs with low abundance are added to the dataset.}
  \label{Fig:Nuria1}
\end{figure}

\paragraph{Impact of Weaning}

We focus on results obtained on the $500$ most abundant OTUs, which
account for 90.3\% of the total counts. We emphasize than even doing
so, the count table remains quite sparse, with 67\% of null counts and
60\% of positive counts lower than or equal to $5$. The ICL criteria
on this subset selects $\hat{q} = 25$ ($R^2 = 89.4\%$). The main
structure present in the latent subspace is the strong and systematic
impact of weaning (Fig.~\ref{Fig:Nuria2}, left), almost entirely
captured by Axis 1. The variable factor map highlights OTUs from two
specific bacterial families: Lactobacillaceae (red) and Prevotellaceae
(blue).  The former are typically found in dairy products and thought
to be transmitted to the piglets via breast milk. As expected, they
are enriched in suckling piglets and negatively correlated with Axis
1. The latter produce enzymes that are essential to degrade cereals
introduced in the diet after weaning. As reported in \cite{Mach2015},
they are enriched after weaning and positively correlated with Axis
1. The method is thus able to recover well known structures, cope with
sparse count tables and account for varying sequencing depths.

\begin{figure}
  \centering
  \begin{tabular}{@{}c@{}}
    Individual Factor Map \hspace{6em} Variable Factor Map \\
  \includegraphics[width=.8\textwidth]{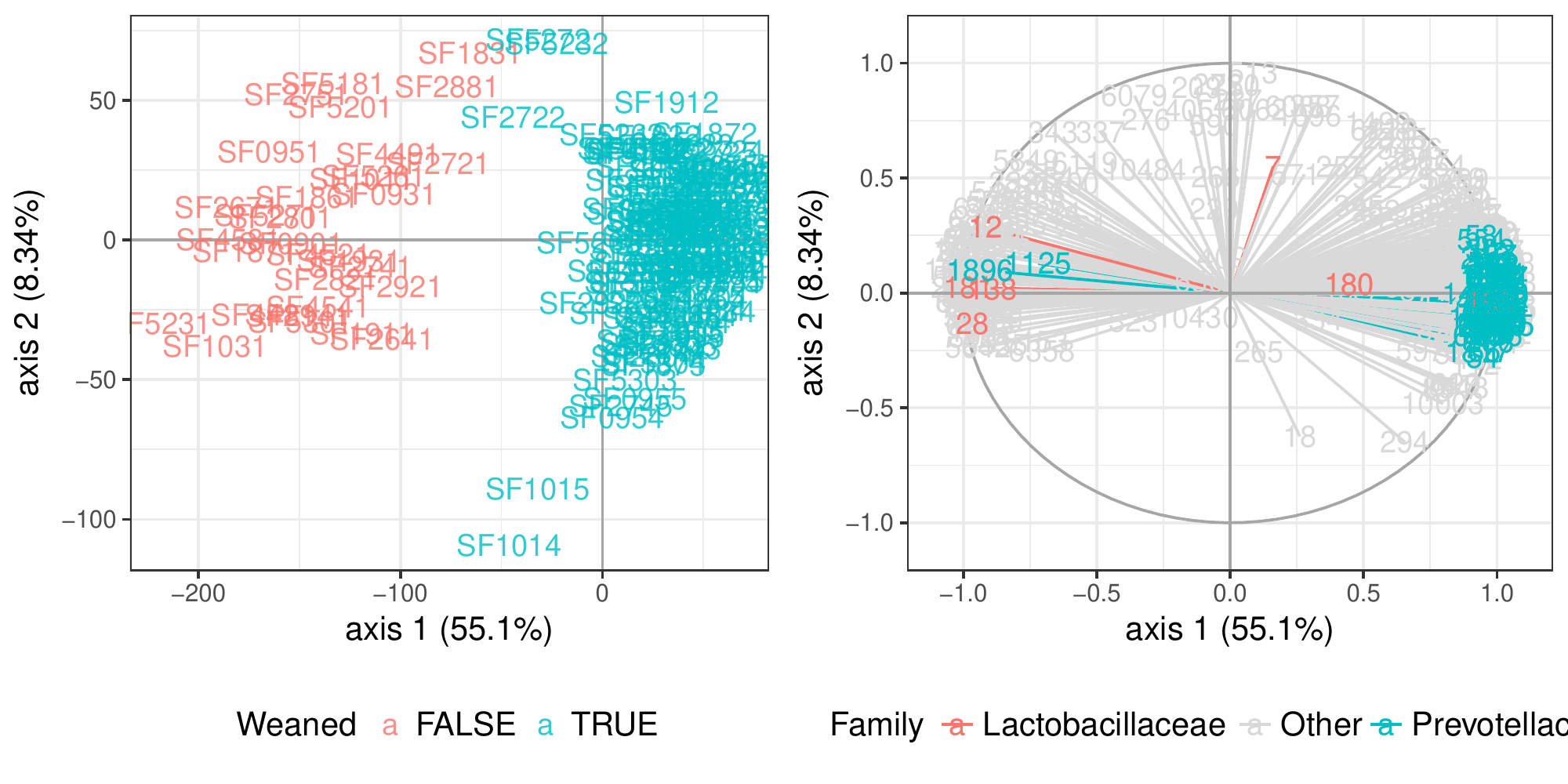}
  \end{tabular}  
  \caption{Individual (left) and variable (right) maps corresponding to the 
first principal plane of the $q$-dimensional approximation. Weaning has a strong 
and systematic effect on gut microbiota composition, well captured by axis 1. 
Bacterial families Prevotellaceae (red) and the Lactobacillaceae (blue) are two 
families well known to be affected by weaning and have a high correlation with 
Axis 1.}
  \label{Fig:Nuria2}
\end{figure}

\subsection{Oak powdery mildew pathobiome}

\paragraph{Description of the experiment}
We considered the metagenomic dataset introduced in \cite{JFS16}.
Similarly to the \cite{Mach2015} dataset, it consists of microbial
communities sampled on the surface of $n=116$ oak leaves. Communities
were sequenced with both the hypervariable V6 region of 16S rRNA as
marker-gene for bacteria and the ITS1 as marker-gene for
fungi. Sequences were cleaned, clustered at the $97$\% identity level
to create OTUs and only the most abundant ones were kept (see
\cite{JFS16} for details of OTU picking and selection) resulting in a
total of $p = 114$ OTUs (66 bacterial ones and 44 fungal ones). One
aim of this experiment is to understand the association between the
abundance of the fungal pathogenic species {\sl E. alphitoides},
responsible for the oak powdery mildew, and the other
species. Furthermore, the leaves were collected on three trees with
different resistance levels to the pathogen. In addition to the
sampling tree, several covariates, all thought to potentially
structure the community, were measured for each leaf: orientation,
distance to ground, distance to trunk, direction, etc.

We emphasize that our goal slightly differs from the one of
\cite{JFS16} as these authors were interested in reconstructing the
ecological network of the species, whereas our purpose is to summarize
the species' dependency structure in low dimension. Our approach also
differs from a methodological view-point as we jointly estimate the
effect of the covariates $\matr{\Theta}$ and the dependency structure
$\matr{\Sigma}$ while they first corrected the observed counts for the
effect of the covariates using a regression model before feeding the
residuals from that regression to a network inference method. This
two-steps procedure fails both to account for the fact that
$\matr{\Theta}$ is estimated and to propagate uncertainty from the
first step to the second one.

\begin{figure}
 \centering
 \subfloat[Model selection]{
 \label{Fig:Vacher-Criterion}
 \begin{tabular}{@{}l@{\hspace{1ex}}l@{}}
 & \hspace{3em} Offset ($M_0$) \hspace{10em} Offset and covariates ($M_1$) \\
 \rotatebox{90}{\hspace{5em} criterion value} & \includegraphics[width=.925\textwidth]{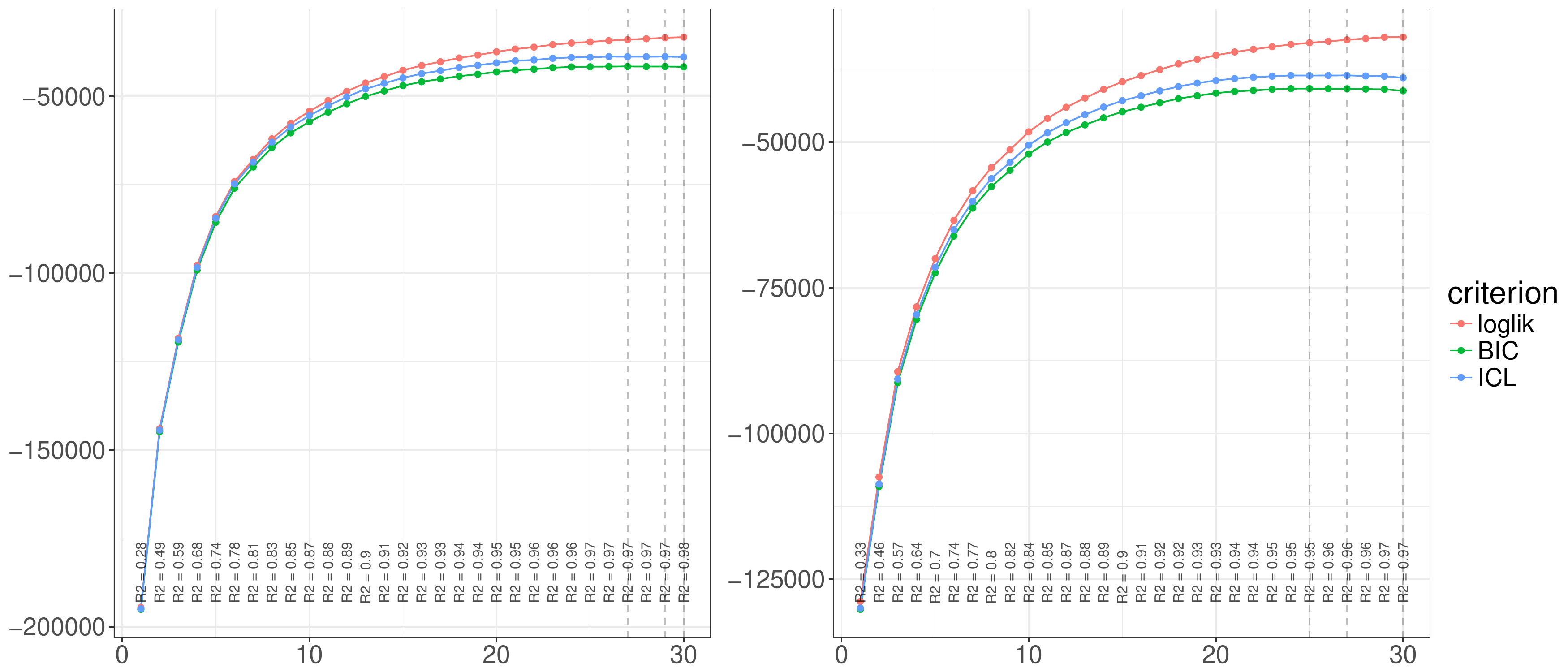} \\
 & \hspace{13em} number of axes \\
 \end{tabular} 
 }\\
 \subfloat[Goodness of fit and Entropy]{
 \label{Fig:Vacher-GoodnessEntropy} 
 \begin{tabular}{@{}l@{\hspace{5ex}}l@{}}
 & \hspace{3em} $R_q^2$ criterion \hspace{7.5em} Entropy ($\mathrm{BIC}_q - \mathrm{ICL}_q$) \\
 \rotatebox{90}{\hspace{5em} criterion value} & \includegraphics[width=.875\textwidth]{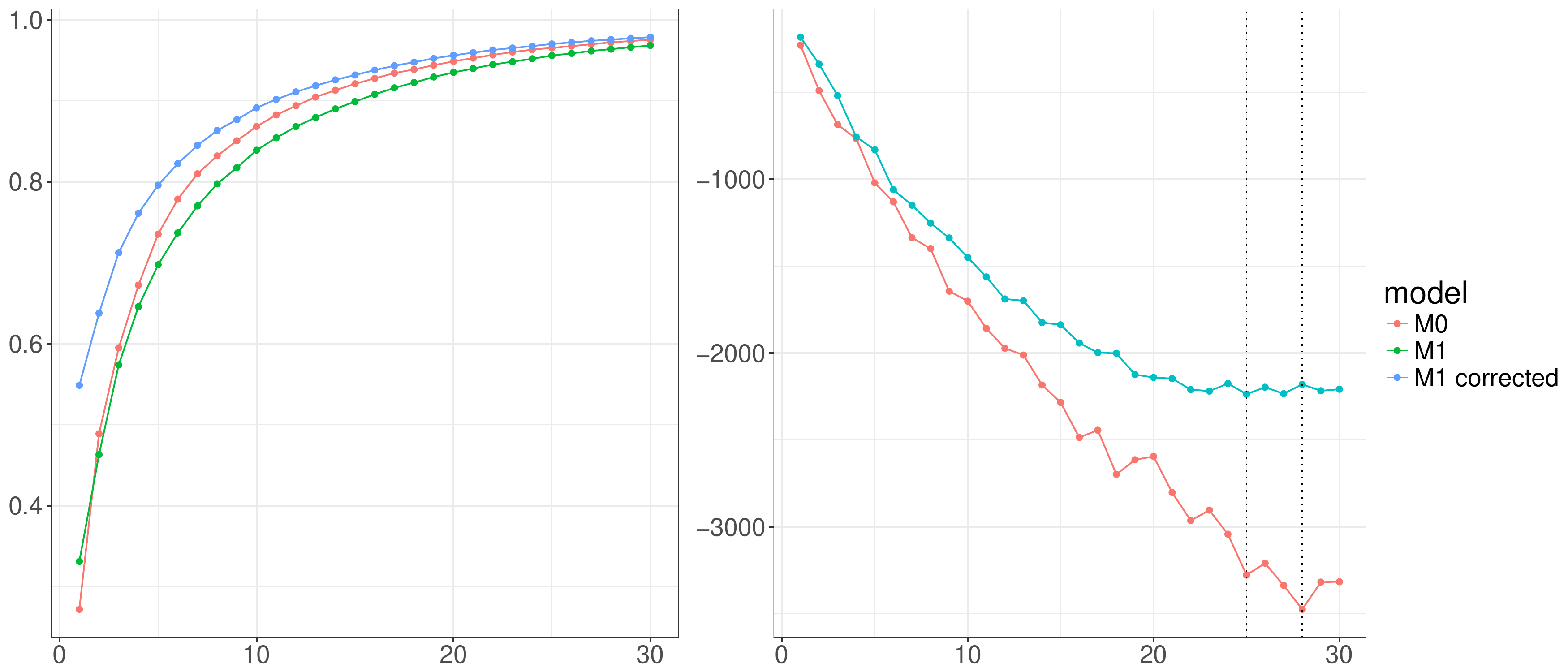} \\
 & \hspace{12em} number of axes \\
 \end{tabular} 
 }
 \caption{Dataset from \cite{JFS16}. $(a)$ model selection criteria
 $J_q$, $BIC_q$ and $ICL_q$ for model $M_0$ (left) and $M_1$
 (right); $(b)$ $R^2_q$ criterion and entropy of
 $\widetilde{p}(\matr{W})$}
 \label{Fig:Vacher}
\end{figure}

\begin{figure}
 \centering

 \subfloat[Individual Factor Maps and tree status]{
 \label{Fig:Vacher-IndMap} 
 \begin{tabular}{@{}l@{}}
 \hspace{2.5em} Offset ($M_0$) \hspace{11.5em} Offset and covariates ($M_1$) \\
 \includegraphics[width=.95\textwidth]{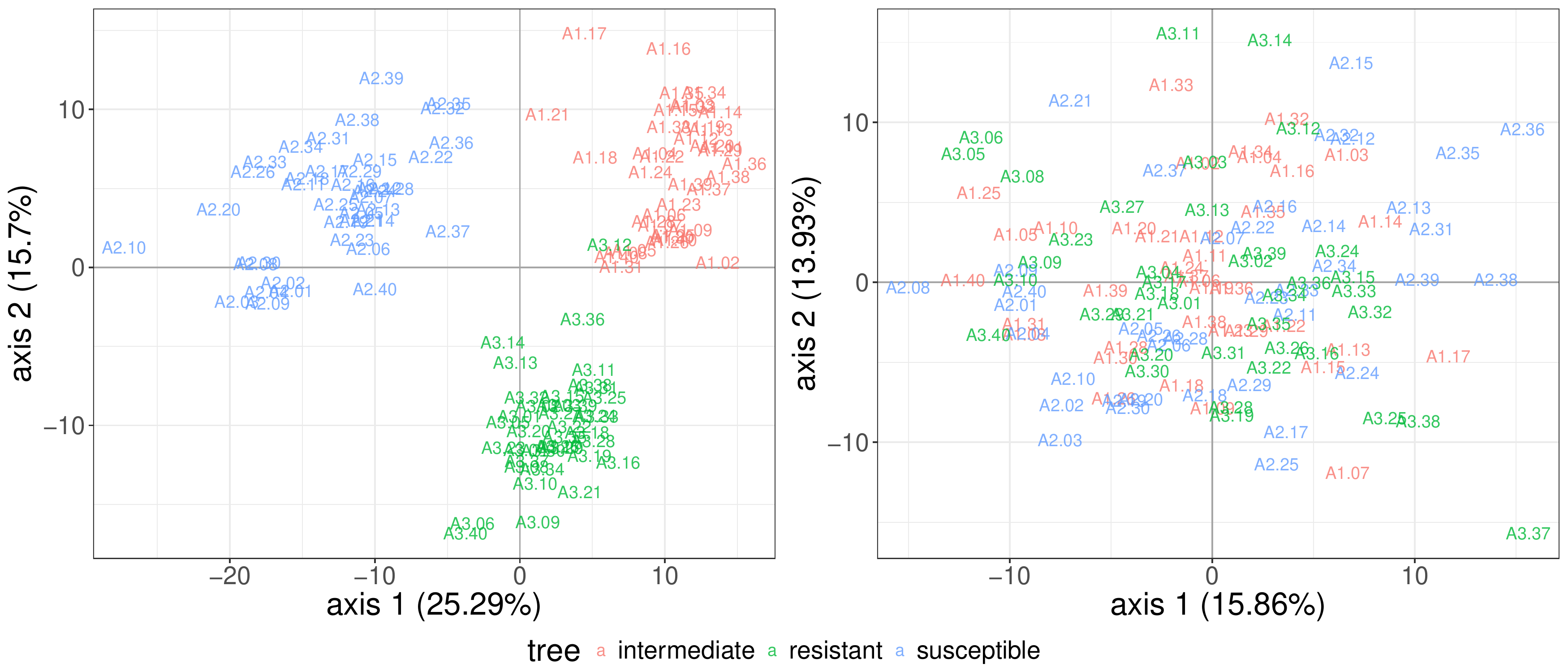} \\
 \end{tabular}
 }\\
 
 \subfloat[Individual Factor Maps and distance to ground]{
 \label{Fig:Vacher-IndMap-distTOtrunk} 
 \begin{tabular}{@{}l@{}}
 \hspace{2.5em} Offset ($M_0$) \hspace{11.5em} Offset and covariates ($M_1$) \\
 \includegraphics[width=.95\textwidth]{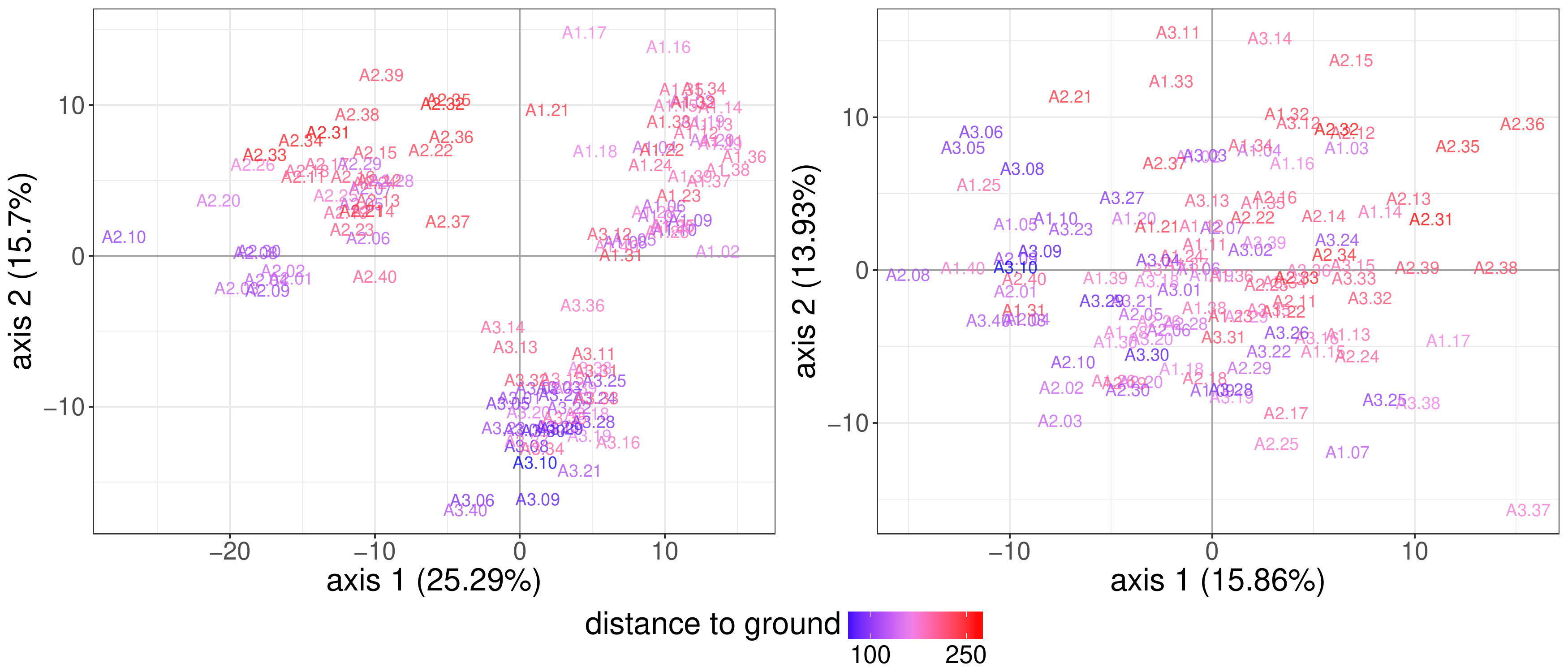} \\
 \end{tabular}
 }\\

 \caption{Dataset from \cite{JFS16}. Scatter plot of the leaves on the
   first two principal components (left: $M_0$, right: $M_1$) with
   colors corresponding to either tree status $(a)$ or distance to
   ground $(b)$. Accounting for tree status reveals an ecological
   gradient along distance to ground.}
 \label{Fig:Vacher2}
\end{figure}

\paragraph{Importance of the offset}
The abundances $Y_{ij}$ (where $i$ denotes the leaf and $j$ the OTUs)
were measured separately for fungi and bacteria resulting in different
sampling efforts for the two types of OTUs: the median total abundance
were respectively 668 for bacteria and 2166 for fungi.  To account for
this, we define a different offset $O_{ij}$ term for each OTU
type. Offsets are still computed as the log-total sums of reads,
including those of filtered out OTUs, for each OTU type.

\paragraph{Model selection}
The three trees from which the leafs where collected were respectively
susceptible, intermediately resistant (hereafter ``intermediate'') and
resistant to mildew. We first fitted a null Poisson-lognormal model
$M_0$ as defined in \eqref{eq:pca-model-with-covariates} with only an
offset term. Alternatively, we considered model $M_1$ involving two covariates:
the tree from which each leaf was collected from, and the orientation
(0=south-east, 1=north-west) of its branch.

Figure \ref{Fig:Vacher-Criterion} displays the lower bound $J$, the
BIC and the ICL for model $M_0$ (left) and $M_1$ (right) as a function
of the number of axes $q$ considered. We observe that the $J_q$ is
always increasing and that both BIC and ICL criteria behave similarly.
According to the ICL criterion, we selected $\widehat{q}_0 = 28$
($ICL = -38,619$) latent dimensions for model $M_0$ and
$\widehat{q} = 25$ ($ICL = -38,472$) for model $M_1$. This suggest that
the two models (with their respective optimal dimension) provide a
very similar fit.

We looked at the approximate posterior entropy in panel left of
Figure~\ref{Fig:Vacher-GoodnessEntropy}: we observed that it is
minimal near to the respective optimum in terms of model
selection. This indicates that the selected dimensions are also
optimal in terms of uncertainty on the latent variables.

\paragraph{Effect of the covariates}
The choice between model $M_0$ and $M_1$ is mostly a matter of the type of 
dependency we analyze with each of them, as the former does not account for the 
covariates whereas the latter does. 
This is illustrated in Figure \ref{Fig:Vacher-IndMap} (top), when plotting
the first principal plane. In model $M_0$ (left), the leafs collected
on each tree are clearly separated. As expected, taking the tree as a
covariate (right) removes the tree effect from the principal plane.

Adding covariates in the model also allows us to explore second-order
structuring effects that are masked by the strong first-order effect
of the sampling tree. Figure \ref{Fig:Vacher-IndMap} (bottom) thus
shows that in addition to sampling tree, communities are structured by
the distance of the leaf to the ground. The effect of covariates on
the abundance of {\sl E. alphitoides} were also consistent: the
estimated parameters $\theta_{ij}$ associated with the intermediate
and resistant trees were $-3.94$ and $-7.05$, respectively, taking the
susceptible tree as a reference.

We compared the respective estimates of $\matr{\Sigma}$ under $M_0$ (denoted $\widehat{\matr{\Sigma}}_0$) and under $M_1$ ($\widehat{\matr{\Sigma}}_1$) focusing on the
correlations between {\sl E. alphitoides} and the other OTUs. 
$\widehat{\matr{\Sigma}}_0$ contains correlations between OTUs that are 
either due to marginal co-variations between them or to the effect of the 
covariates, whereas the correlations in $\widehat{\matr{\Sigma}}_1$ are 
corrected from the effect of covariates. We first observed a reduction of the 
variances (median=.175, mean=.303 in $\widehat{\matr{\Sigma}}_0$; median=.087, 
mean=.176 in $\widehat{\matr{\Sigma}}_1$), which proves the strong effect 
of covariates on the abundance of the different OTUs. We then ranked 
all species according to their correlation with the pathogene and found 
very different rankings $M_0$ and $M_1$ (Kendall's $\tau =.41$), showing that 
the covariates drastically change the apparent relationship between OTU
abundances.

\paragraph{Percentage of variance} 
We now comment on use of the
$R_q^2$ criterion defined in Section \ref{sec:Visualization} to
evaluate the proportion of variability captured by a model with $q$
latent dimensions. $R_q^2$ compares the pseudo-likelihood $\ell_q^m$
obtained with $q$ latent dimensions under model $M_m$ ($m= 0, 1$) with
the likelihoods $\ell_{\min}^m$ and $\ell_{\max}^m$. We know that
$\ell_{\max}^0 = \ell_{\max}^1$ whereas $\ell_{\min}^0 <
\ell_{\min}^1$ because $\ell_{\min}^0$ only relies on the offsets
whereas $\ell_{\min}^1$ accounts for both the offsets and the
covariates. As a consequence, $R_q^2$ tends to be higher under $M_0$
than under $M_1$ for a given $q$. Right panel of Figure \ref{Fig:Vacher-GoodnessEntropy} compares the genuine $R_q^2$ under models $M_0$ and $M_1$ and the corrected version of $R_q^2$ under model $M_1$ using $\ell_{\min}^0$ in place of $\ell_{\min}^1$. As expected, the corrected version of $R_q^2$ is always higher under $M_1$ than under $M_0$. We also observe that, for both models, the proportion of variability captured by the latent space is quite high: $R_{28}^2 = 97.21\%$ for $M_0$ and $R_{25}^2 = 97.02\%$ for $M_1$. We remind that $\widehat{q}_0 = 28$ and $\widehat{q}_1 = 25$ should both be compared with $p = 114$.

\paragraph{Variance of the variational conditional distribution} 
We remind that $S_{ij}$ is the approximate conditional standard
deviation of $W_{ij}$ given the data. This parameter measures the
precision of the location of individual $i$ along the $j$-th latent
dimension. We can derive from them the approximate conditional
variance of each $Z_{ij}$ as
$[\matr{B} \diag(\vect{s}_i \odot \vect{s}_i)
\matr{B}^\intercal]_{jj}$.
Figure \ref{Corinne:PostVarZ} shows that this variance is much higher
when the corresponding abundance $Y_{ij}$ is low. Indeed, any large
negative values of $Z_{ij}$ yields a Poisson parameter close to zero
and in turn a null $Y_{ij}$. As a consequence, large negative $Z_{ij}$
cannot be predicted accurately. This is a natural consequence of the
non-linear nature of the exponential transform: large swaths of the
parameter space are compressed to small regions of the observation
space.

\begin{figure}[ht!]
 \centering
 \begin{tabular}{@{}c@{}}
 approximate conditional standard error of $Z_{ij}$ \\
 \includegraphics[width=.5\textwidth]{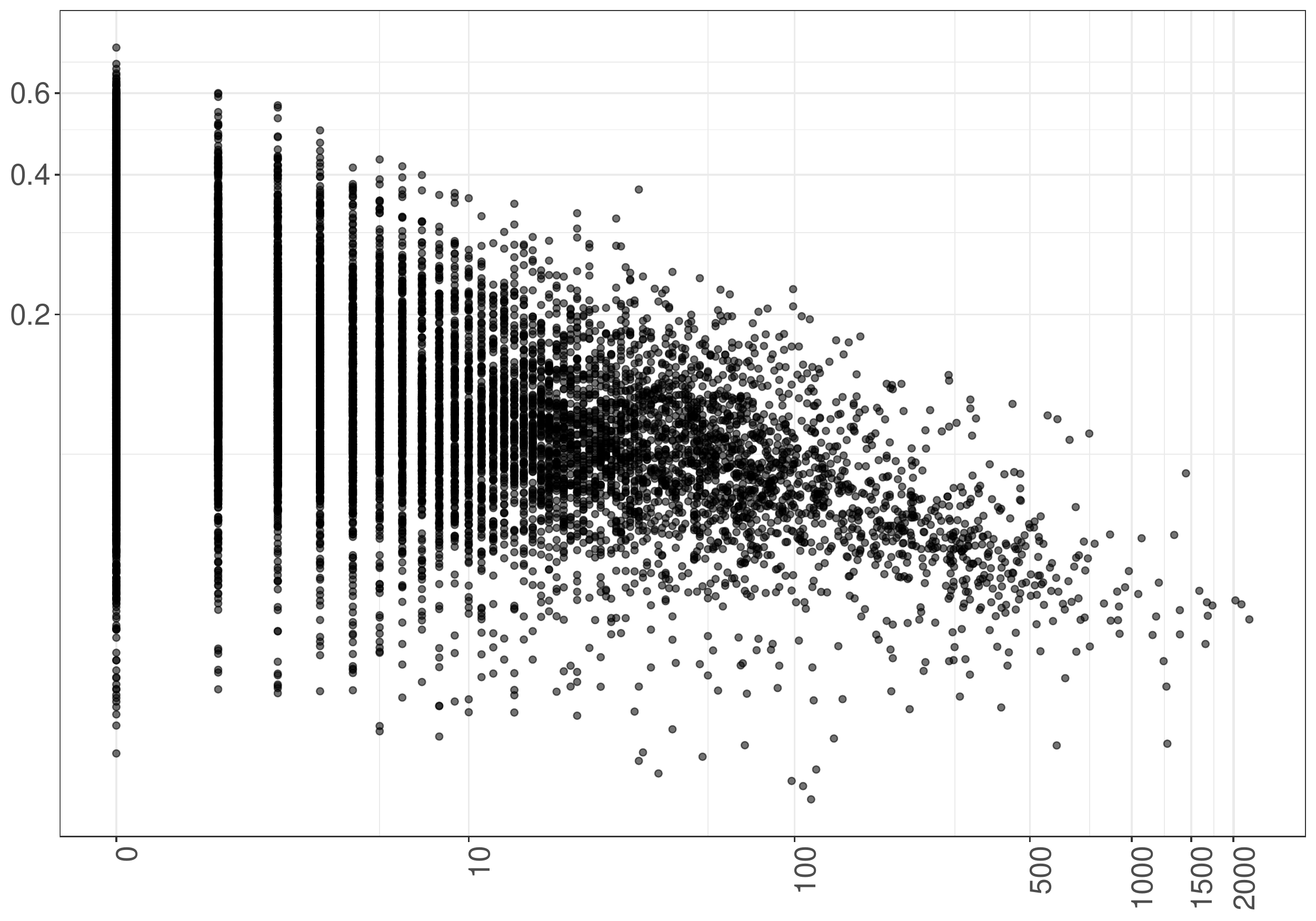}\\
 $Y_{ij}$ (log scale) \\
 \end{tabular}
 
 \caption{Variational approximate conditional standard error of the
 $Z_{ij}$ ($y$ axis) as a function of the abundance $Y_{ij}$ ($x$
 axis). \label{Corinne:PostVarZ}}
\end{figure}

\paragraph{Acknowledgement}
We thank Corinne Vacher and Nuria Mach for providing the data and
discussing the results. Two anonymous reviewers gave insightful
comments that helped improve the quality of this manuscript. We also
thank Charlie Pauvert for his feedback on the code and Elsa Teuli\`ere
for finding extra typos in the paper.

This work partially was funded by ANR Hydrogen (project
ANR-14-CE23-0001) and Inra MEM Metaprogramme (Meta-omics and microbial
ecosystems, projects LearnBioControl and Brassica-Dev).

\bibliographystyle{plainnat}
\bibliography{CMR17.bib}

\appendix
\section{Convexity Lemmas} 

\label{app:convexity}

\begin{lemma} \label{Lem:ConvExp}
For any vectors $\vect{\theta}$, $\vect{x}$, $\vect{m}$, $\vect{s}$ and $\vect{b}$ (with matching dimensions) and convex function $f$, if $\vect{u} \sim \mathcal{N}(0, \matr{I})$ and $\vect{w} = \vect{m} + \vect{s} \odot \vect{u} \sim \Ncal(\vect{m}, \diag(\vect{s} \odot \vect{s}))$, then the map $g: (\vect{\theta}, \vect{m}, \vect{s}, \vect{b}) \mapsto \Esp[ f(\crossprod{\theta}{x} + \crossprod{b}{w}) ]$ is convex in $(\vect{\theta}, \vect{b})$  for $(\vect{m}, \vect{s})$ fixed and vice-versa. 
\end{lemma}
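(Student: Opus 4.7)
The plan is to apply the reparametrization trick, rewriting the expectation under a parameter-free distribution, and then to exploit the fact that the argument of $f$ becomes affine separately in each block of variables.

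First, using $\vect{w} = \vect{m} + \vect{s} \odot \vect{u}$ with $\vect{u} \sim \Ncal(0, \matr{I})$, I would observe that $\vect{b}^\trans \vect{w} = \vect{b}^\trans \vect{m} + (\vect{b} \odot \vect{s})^\trans \vect{u}$, so that
$$g(\vect{\theta}, \vect{m}, \vect{s}, \vect{b}) \;=\; \Esp_{\vect{u}}\!\left[ f\!\left(\vect{\theta}^\trans \vect{x} + \vect{b}^\trans \vect{m} + (\vect{b} \odot \vect{s})^\trans \vect{u}\right) \right].$$
The expectation is now taken with respect to a fixed distribution, and all parameter dependence sits inside $f$.

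Next, for $(\vect{m}, \vect{s})$ and $\vect{u}$ held fixed, the argument of $f$ is affine in $(\vect{\theta}, \vect{b})$: it is the sum of $\vect{\theta}^\trans \vect{x}$ and $\vect{b}^\trans(\vect{m} + \vect{s} \odot \vect{u})$, both linear. Composing a convex function with an affine map yields a convex function, so the integrand is pointwise convex in $(\vect{\theta}, \vect{b})$. Convexity is preserved under the positive-linear operation of expectation, so $g$ is convex in $(\vect{\theta}, \vect{b})$ at fixed $(\vect{m}, \vect{s})$. Symmetrically, for $(\vect{\theta}, \vect{b})$ and $\vect{u}$ held fixed, the argument of $f$ is affine in $(\vect{m}, \vect{s})$, since the only term involving $\vect{s}$ is $(\vect{b} \odot \vect{s})^\trans \vect{u} = \sum_k b_k u_k s_k$, which is linear in $\vect{s}$ once $\vect{b}$ and $\vect{u}$ are fixed. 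The same argument then gives convexity of $g$ in $(\vect{m}, \vect{s})$.

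There is no real obstacle here: once the reparametrization is in place, everything follows from three standard facts, namely that composition of a convex function with an affine map is convex, that an expectation of a family of convex functions is convex (assuming integrability), and that separate convexity in each block is strictly weaker than joint convexity. The last point matters because the bilinear coupling $\vect{b} \odot \vect{s}$ would obstruct joint convexity in $(\vect{\theta}, \vect{b}, \vect{m}, \vect{s})$, but the biconvex statement carefully avoids this issue by fixing one of the two blocks.
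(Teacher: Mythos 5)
Your proof is correct, but it takes a different route from the paper. The paper proves Lemma~\ref{Lem:ConvExp} by a second-order argument: it writes out the gradient of $g$ and the two blockwise Hessians $\matr{\Psi}_1(\vect{\theta},\vect{b})$ and $\matr{\Psi}_2(\vect{m},\vect{s})$ explicitly, then shows that the associated quadratic forms collapse to $\Esp[f''(Z)\,(\cdot)^2]\geq 0$, so each Hessian is positive semidefinite. You instead use the same reparametrization $\vect{w}=\vect{m}+\vect{s}\odot\vect{u}$ but argue at first principles: the argument of $f$ is affine in $(\vect{\theta},\vect{b})$ for fixed $(\vect{m},\vect{s},\vect{u})$ and affine in $(\vect{m},\vect{s})$ for fixed $(\vect{\theta},\vect{b},\vect{u})$, composition of a convex function with an affine map is convex, and expectation preserves convexity. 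Your argument is more elementary and strictly more general, since it requires neither twice-differentiability of $f$ nor any justification for differentiating under the expectation sign (only integrability, which you rightly flag); the paper gets these for free because the relevant $f$ is the analytic cumulant function $b$ of a natural exponential family. What the paper's heavier computation buys is a set of explicit gradient and Hessian formulas whose first-order part mirrors the matrices $\matr{A}'_1$ and $\matr{A}'_2$ used in the optimization routine, so the proof doubles as a derivation sketch for Section~\ref{sec:app-gradient}. Your closing remark that the bilinear coupling $\vect{b}\odot\vect{s}$ obstructs joint convexity, so that only biconvexity can hold, is a correct and worthwhile observation that the paper leaves implicit.
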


\begin{proof}
  Note $Z  = \crossprod{\theta}{x} + \transpose{b}\vect{w} = (\crossprod{\theta}{x} + \crossprod{b}{m}) + \transpose{b} (\vect{s} \odot \vect{u})$. The first order derivative of $g$ is
  \begin{equation*}
    \vect{\nabla}(\vect{\theta}, \vect{b}, \vect{m}, \vect{s}) = \Esp \left[ f'(Z) 
      \begin{bmatrix}
        \vect{x} &  \vect{m} +  \vect{s} \odot  \vect{u} &  \vect{b} &
        \vect{b} \odot \vect{u}
      \end{bmatrix}^\trans
    \right].
  \end{equation*}
  The second order partial derivatives of $g$ are: 
  \begin{align*}
  \matr{\Psi}_1(\vect{\theta}, \vect{b}) & = \Esp \left[ f''(Z)
        \begin{bmatrix}
          \tcrossprod{x}{x} & \vect{x}(\vect{m} + \vect{u} \odot \vect{s})^\trans \\
          (\vect{m} + \vect{s} \odot \vect{u})\transpose{x} & (\vect{m} + \vect{s} \odot \vect{u})(\vect{m} + \vect{s} \odot \vect{u})^\trans \\
        \end{bmatrix}
      \right]   \\
  \matr{\Psi}_2(\vect{m}, \vect{s}) & = \Esp \left[ f''(Z)
        \begin{bmatrix}
          \tcrossprod{b}{b} & \vect{b}(\vect{b} \odot \vect{u})^\trans \\
          (\vect{b} \odot \vect{u})\transpose{b} & (\vect{b} \odot \vect{u})(\vect{b} \odot \vect{u})^\trans\\
        \end{bmatrix}  
      \right]
  \end{align*}
  And the associated quadratic form $\Phi_1(\vect{v}, \vect{w}) = (\vect{v}, \vect{w})^\trans \matr{\Psi}_1(\vect{\theta}, \vect{b}) (\vect{v}, \vect{w})$ and $\Phi_2(\vect{v}, \vect{w}) = (\vect{v}, \vect{w})^\trans \matr{\Psi}_2(\vect{m}, \vect{s}) (\vect{v}, \vect{w})$ can be simplified to 
  \begin{align*}
    \Phi_1(\vect{v}, \vect{w}) & = \Esp[f''(Z) (\crossprod{x}{v} + (\vect{m} + \vect{s} \odot \vect{u} )^\trans \vect{w})^2 ] \geq 0 \\
    \Phi_2(\vect{v}, \vect{w}) & = \Esp[f''(Z) (\crossprod{b}{v} + (\vect{b} \odot \vect{u} )^\trans\vect{w})^2 ] \geq 0
  \end{align*}
  The Hessians $\matr{\Psi}_1$ and $\matr{\Psi}_2$ are thus semidefinite positive, which ends the proof. 
\end{proof}

\begin{lemma} 
  \label{Lem:ConvExpComp}
  For any matrices $\matr{\Theta}$, $\matr{X}$, $\matr{M}$, $\matr{S}$
  and $\matr{B}$  (with matching dimensions) and  convex function $f$,
  if  $\matr{U} =  [\vect{U}_1, \dots,  \vect{U}_n]^\trans$ where  the
  $\vect{U}_i$               are               i.i.d               and
  $\vect{U}_i      \sim       \Ncal(\vect{0},      \matr{I})$      and
  $\matr{W}   =  \matr{M}   +  \matr{S}   \odot  \matr{U}$.   The  map
  $g:   (\matr{\Theta},   \matr{M},    \matr{S},   \matr{B})   \mapsto
  \transpose{1}_n        \Esp[       f(\tcrossprod{X}{\Theta}        +
  \tcrossprod{W}{B})                   ]                   \vect{1}_p$
  is convex in $(\matr{\Theta},  \matr{B})$ for $(\matr{M}, \matr{S})$
  fixed and vice-versa.
\end{lemma}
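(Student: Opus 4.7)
My plan is to reduce Lemma~\ref{Lem:ConvExpComp} to the vector-valued Lemma~\ref{Lem:ConvExp} by decomposing the scalar $\transpose{1}_n \Esp[f(\tcrossprod{X}{\Theta} + \tcrossprod{W}{B})]\vect{1}_p$ into a double sum over the rows $i$ and columns $j$, and then invoking Lemma~\ref{Lem:ConvExp} separately on each summand.

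First, I would unpack the $(i,j)$-entry of the matrix argument of $f$: writing $\vect{\Theta}_j$ for the $j$-th row of $\matr{\Theta}$ viewed as a column vector in $\mathbb{R}^d$ and $\vect{B}_j$ for the $j$-th row of $\matr{B}$ viewed as a column vector in $\mathbb{R}^q$, and similarly $\vect{X}_i$, $\vect{W}_i$, $\vect{M}_i$, $\vect{S}_i$, $\vect{U}_i$ for the $i$-th rows of $\matr{X}, \matr{W}, \matr{M}, \matr{S}, \matr{U}$, one has
\begin{equation*}
  (\tcrossprod{X}{\Theta} + \tcrossprod{W}{B})_{ij} = \crossprod{X_i}{\Theta_j} + \crossprod{W_i}{B_j},
  \quad \vect{W}_i = \vect{M}_i + \vect{S}_i \odot \vect{U}_i.
\end{equation*}
Since $f$ is applied entry-wise and the quadratic form $\transpose{1}_n(\cdot)\vect{1}_p$ just sums all entries, the objective splits as
\begin{equation*}
  g(\matr{\Theta}, \matr{M}, \matr{S}, \matr{B}) = \sum_{i=1}^n \sum_{j=1}^p \Esp\!\left[ f\!\left(\crossprod{X_i}{\Theta_j} + \crossprod{B_j}{W_i}\right) \right].
\end{equation*}

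Second, each summand is exactly of the form treated by Lemma~\ref{Lem:ConvExp}, with the identifications $\vect{\theta} \leftarrow \vect{\Theta}_j$, $\vect{x} \leftarrow \vect{X}_i$, $\vect{b} \leftarrow \vect{B}_j$, $\vect{m} \leftarrow \vect{M}_i$, $\vect{s} \leftarrow \vect{S}_i$, $\vect{u} \leftarrow \vect{U}_i$ (the i.i.d.\ assumption on the $\vect{U}_i$ is exactly what makes $\vect{W}_i$ a standard diagonal-covariance Gaussian for each $i$). Lemma~\ref{Lem:ConvExp} therefore yields convexity of each summand in $(\vect{\Theta}_j, \vect{B}_j)$ when $(\vect{M}_i, \vect{S}_i)$ is held fixed, and convexity in $(\vect{M}_i, \vect{S}_i)$ when $(\vect{\Theta}_j, \vect{B}_j)$ is held fixed.

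Finally, I would close the argument by observing that sums of convex functions are convex, even when the summands depend on disjoint blocks of the global variables: for fixed $(\matr{M}, \matr{S})$, the $(i,j)$-summand depends on $(\matr{\Theta}, \matr{B})$ only through the pair $(\vect{\Theta}_j, \vect{B}_j)$, so convexity of each summand in its own block lifts to joint convexity over the stacked variables $(\matr{\Theta}, \matr{B})$. The symmetric statement for $(\matr{M}, \matr{S})$ with $(\matr{\Theta}, \matr{B})$ fixed follows the same way, since then each summand depends on $(\matr{M}, \matr{S})$ only through $(\vect{M}_i, \vect{S}_i)$.

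There is no real obstacle here — the whole content lives in Lemma~\ref{Lem:ConvExp}, which handles the expectation of a convex composition via the Hessian computation in $\vect{u}$. The only thing to be careful about is the index bookkeeping: making sure one uses the same row index $i$ for $\vect{W}_i$, $\vect{M}_i$, $\vect{S}_i$, $\vect{U}_i$ (so the Gaussian structure of Lemma~\ref{Lem:ConvExp} applies unchanged), and the same column index $j$ for $\vect{\Theta}_j$, $\vect{B}_j$, so that the separability argument used in the last step is clean.
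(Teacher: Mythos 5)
Your proposal is correct and follows essentially the same route as the paper: the paper's proof likewise writes $g$ as the sum of the entrywise terms $g_{ij}(\matr{\Theta}, \matr{M}, \matr{S}, \matr{B}) = \Esp[ f(\transpose{X}_i\vect{\Theta}_j + \transpose{B}_j (\vect{M}_i + \vect{S}_i \odot \vect{U}))]$ and invokes Lemma~\ref{Lem:ConvExp} on each summand. Your additional remarks on index bookkeeping and on lifting blockwise convexity to the stacked variables are just a more explicit spelling-out of the same argument.
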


\begin{proof}
  The function $g$ is a sum of functions of the form
  $g_{ij}:  (\matr{\Theta},  \matr{M},   \matr{S},  \matr{B})  \mapsto
  \Esp[ f(\transpose{X}_i\vect{\Theta}_j + \transpose{B}_j (\vect{M}_i
  +          \vect{S}_i          \odot          \vect{U})          ]$.
  The result follows from Lemma~\ref{Lem:ConvExp}.
\end{proof}


\end{document}